\newlist{todolist}{itemize}{2}
\setlist[todolist]{label=$\square$}
\DeclarePairedDelimiter\ket{\lvert}{\rangle}
\DeclarePairedDelimiterX\braket[2]{\langle}{\rangle}{#1 \delimsize\vert #2}
\newtheorem{theorem}{Theorem}[section]
\newtheorem{corollary}{Corollary}[theorem]
\newtheorem{conjecture}{Conjecture}[theorem]
\newtheorem{lemma}[theorem]{Lemma}
\theoremstyle{definition}
\newtheorem{definition}{Definition}[section]
\title{Quantum Voting and Violation of Gibbard-Satterthwaite's Impossibility Theorem}
\date{September 2023}
\author[1]{Aidan Casey}
\author[1]{Ethan Dickey \thanks{Corresponding Author: Ethan Dickey, dickeye@purdue.edu}}
\affil[1]{Department of Computer Science, Purdue University, West Lafayette, IN 47907, USA}
\begin{document}

\maketitle

\begin{abstract}

    In the realm of algorithmic economics, voting systems are evaluated and compared by examining the properties or axioms they satisfy. While this pursuit has yielded valuable insights, it has also led to seminal impossibility results such as Arrow's and Gibbard-Satterthwaite's Impossibility Theorems, which pose challenges in designing ideal voting systems. Enter the domain of quantum computing: recent advancements have introduced the concept of quantum voting systems, which have many potential applications including in security and blockchain. Building on recent works that bypass Arrow's Impossibility Theorem using quantum voting systems, our research extends Quantum Condorcet Voting (QCV) to counter the Gibbard-Satterthwaite Impossibility Theorem in a quantum setting. To show this, we introduce a quantum-specific notion of truthfulness, extend ideas like incentive compatibility and the purpose of onto to the quantum domain, and introduce new tools to map social welfare functions to social choice functions in this domain.



\end{abstract}

\newpage
\tableofcontents

\newpage
\section{Introduction}
    Quantum computing has ushered in a new era in various computational disciplines, from cryptography \cite{wiesner1983conjugate, Bennett1985Crypto, Gisin2002Crypto, MikeAndIke2011} and blockchain technology \cite{Schneier_1994, Bernstein2009, Kiktenko_2018, sun2018quantum, Sun2019QBlockchain} to complex simulations in chemistry \cite{olson2017quantum, Preskill2018quantumcomputingNISQera}, and more \cite{flitney2002introduction, HANAUSKE2007650,      eisert1999quantum, burke2020quantum}. Similarly, voting systems have expanded from their rudimentary forms in ancient civilizations to sophisticated algorithms, deeply embedded in game theory and social choice functions. This paper explores the intersection of these two influential domains, particularly focusing on the development and analysis of quantum voting systems.

    Classically, there has been much study into voting systems that take a linear order over the possible alternatives from each of the $n$ voters to determine a single winner. This is usually written in the shorthand $\mathcal{L}^n \rightarrow A$, where $\mathcal{L} = \mathcal{L}(A)$ is the set of all linear orders over the set of alternatives $A$. Voting rules that fall under this category are known as ordinal social choice functions, and are subject to the Gibbard-Satterthwaite Impossibility Theorem (GS) \cite{GibbardSatterthwaite1, GibbardSatterthwaite2}, which states that these voting systems which have three properties, (1) non-duplicity (more than 2 alternatives are possible outputs), (2) onto (every alternative could be elected), and (3) incentive compatibility (it is always just as optimal for a voter to vote truthfully), are dictatorships. Further work first by Gibbard \cite{Gibbard1977Random} and later expounded upon by Procaccia and many others \cite{Procaccia_2010, brandl2015incentives, Aziz_Luo_Rizkallah_2018, Chakrabarty2014welfare, birrell2011approximately, brandt2017rolling} deal with voting systems that produce probability mixtures over alternatives instead of a single winner. This is denoted $\mathcal{L}^n \rightarrow \Delta(A)$. Such a probability mixture can then be sampled to obtain a single winner. Again, we find these systems subject to Gibbard's Theorem (1977) \cite{Gibbard1977Random}, which states that voting systems which produce probability \textit{mixtures} over alternatives are also subject to the same impossibility conditions as non-probabilistic voting systems from GS. These impossibility theorems show fundamental restrictions when voting on non-binary decisions, which has pervasive applicability in many areas.

    Beyond voting systems that produce a single winner, the natural next step is to discuss ones that produce rankings over alternatives, denoted $\mathcal{L}^n \rightarrow \mathcal{L}$. These systems have similarly been very well studied and are subject to Arrow's Impossibility Theorem \cite{ArrowsTheorem} which, similar to GS, states that these systems which have three properties, namely (1) non-duplicity, (2) independence of irrelevant alternatives (the relative ranking of $a$ and $b$ does not depend on the relative ranking of $a$ and any other alternative), and (3) unanimity (if everyone votes for $a$ over $b$, $a$ will be over $b$), are dictatorships. Arrow's theorem has fundamental impacts across many fields, but does not fundamentally deal with truthfulness as GS does. The last non-probabilistic input voting system is, naturally, voting systems that map voter preferences in the form of rankings to probability mixtures over rankings, $\mathcal{L}^n \rightarrow \Delta(\mathcal{L})$. Barbera \cite{barbera_1978} generalizes Arrow's Theorem into this context.

    Shifting the focus to probabilistic inputs allows us to explore the specific contexts of our problem, $\Delta^n(\mathcal{L}) \rightarrow \Delta(\mathcal{L})$ and $\Delta^n(\mathcal{L}) \rightarrow \Delta(A)$. In the first context, prior research by Bao and Halpern \cite{bao2017quantumarrows} and Sun \textit{et al.} \cite{sun2021schrodinger} has shown that a version of Arrow's Theorem reformulated for this context can be bypassed. Both studies leverage the theoretical framework of quantum computing to prove their results, highlighting the potential of quantum frameworks in addressing complex problems in this field.

    Quantum computing offers an enticing solution to bypass classical restrictions. Classically, $\Delta(\mathcal{L})$ has exponential support, posing challenges for efficient manipulation. One of quantum computing's strengths lies in its ability to efficiently manipulate certain aspects of joint probability distributions over qubits, a feature central to the class of problems known as \textsc{SampBQP} - the class of sampling problems efficiently solvable on a quantum computer \cite{Lund2017}. This is particularly significant given that it is thought that sampling from such distributions classically requires exponentially scaling resources \cite{Boixo2018, Aaronson_2005, Bremner_Jozsa_Shepherd_2010, Bouland2019, aaronson2011ComplexityLinearOptics, Fujii_Morimae_2017, Bremner2016AvgCase}. This line of thinking has also been the basis of at least one quantum supremacy demonstration on the basis of random quantum circuit sampling \cite{Arute2019}. Beyond these computational advantages, quantum frameworks also enable richer ballot structures through superposition and entanglement, potentially leading to more equitable voting outcomes. Furthermore, there are devices that implement this richer context natively which are, by the definition of being quantum, more secure \cite{Gisin2002Crypto}. Further exploration of the interplay between quantum computing and game theory is detailed in Appendix \ref{app_lit_review}.

    Building upon the foundational work of Bao and Halpern \cite{bao2017quantumarrows} and Sun \textit{et al.} \cite{sun2021schrodinger}, our study seeks to answer the following question: \textit{what does truthfulness entail in quantum voting systems, and is the GS impossibility theorem applicable in a quantum context?}

    \subsection{Our approach and results.}
        In this work, we propose some notions of truthfulness and prove that we can bypass a resulting formulation of GS in a quantum setting. As a quick overview, these results are shown as follows. We first reformulate incentive compatibility/strategic voting and the GS Theorem in a quantum setting, extending the formalism introduced by Bao and Halpern, and Sun \textit{et al.} \cite{bao2017quantumarrows, sun2021schrodinger}. We then show that Quantum Condorcet Voting (QCV) is incentive compatible and onto (as it was previously shown that it is not a dictatorship \cite{sun2021schrodinger}) and that a quantum social choice function can be formed from QCV that bypasses a quantum GS theorem (i.e. forming a function in $\Delta^n(\mathcal{L}) \rightarrow \Delta(A)$ from one in $\Delta^n(\mathcal{L}) \rightarrow \Delta(\mathcal{L})$).

        A definition of GS in a quantum setting requires quantum analogues of social choice functions and incentive compatibility on those social choice functions, along with a few simpler definitions such as extending the idea of dictatorship to quantum social choice functions and notions of what onto means in such a case. With these definitions, a quantum version of GS would be defined as:
             
        \begin{conjecture}[Quantum Gibbard-Satterthwaite Conjecture (QGS)]\label{QGS}
            Every quantum incentive compatible (QIC) quantum social choice function (QSCF) $\xi$ onto alternatives $A$ with $|A| > 2$ is a quantum dictatorship.
        \end{conjecture}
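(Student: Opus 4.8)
The plan is to mirror the classical proof of Gibbard--Satterthwaite through its probabilistic strengthening due to Gibbard \cite{Gibbard1977Random}, by specializing the quantum machinery to classical inputs and then lifting the conclusion back to the full superposition domain. Write $\xi \colon \Delta^n(\mathcal{L}) \to \Delta(A)$ for the QIC QSCF in question. The first move is to restrict $\xi$ to the sub-domain of \emph{classical} profiles, those product states in which each voter $i$ submits a definite linear order $\succ_i \in \mathcal{L}$ as a computational-basis ballot. On this sub-domain $\xi$ collapses to an ordinary randomized social choice function $f \colon \mathcal{L}^n \to \Delta(A)$, and the task splits into: (i) showing $f$ inherits classical strategyproofness, ontoness, and non-duplicity; (ii) invoking Gibbard's characterization to conclude $f$ is a random dictatorship; and (iii) promoting that dictatorship to a \emph{quantum} dictatorship on all of $\Delta^n(\mathcal{L})$.

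For (i), I would argue that any profitable classical misreport by a single voter is in particular an admissible quantum deviation, so the QIC constraint evaluated on classical inputs must already forbid it; this should hand back standard strategyproofness for $f$. Ontoness and $|A| > 2$ transfer similarly, since the inputs witnessing the quantum onto condition can be replaced by, or averaged into, classical profiles without changing which alternatives carry positive weight. Step (ii) is then a citation: by Gibbard \cite{Gibbard1977Random}, a strategyproof randomized SCF on the unrestricted preference domain is a probability mixture of dictatorial and duple schemes; non-duplicity eliminates the duple components and ontoness forces the dictatorial weights to be realizable, leaving $f$ a random dictatorship with some decisive voter (or distribution over decisive voters) $d$.

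The crux, and where I expect the genuine difficulty to lie, is step (iii): extending control by $d$ from classical ballots to superposed and entangled ones. The quantum domain strictly enlarges each voter's strategy space, so QIC on superposition inputs is a priori neither implied by nor equivalent to the classical strategyproofness extracted in (i). I would attempt to close the gap with a quantum monotonicity lemma — if $d$'s reduced single-voter ballot favors alternative $a$, then no unitary applied to the remaining voters' registers may shift Born-rule weight away from $a$ — and then use ontoness together with the full QIC constraints (including deviations that rotate $d$'s register into superposition) to show that the outcome distribution of $\xi$ on every input $\ket{\psi}$ must track $d$'s marginal preference, i.e. $\xi$ is a quantum dictatorship. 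The honest obstacle is that entanglement between voters may decouple the global outcome from any single voter's marginal state, in which case the candidate dictator cannot govern the superposition inputs and the argument of (iii) breaks; settling whether the quantum strategy space tightens or relaxes the dictatorship constraint is therefore the decisive step on which the whole proof turns.
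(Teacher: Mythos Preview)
Your proposal attempts to \emph{prove} the conjecture, but the paper's treatment of this statement is to \emph{disprove} it: Theorem~\ref{thm_violate_QGS} shows that QGS is false, by exhibiting QCV composed with the Natural Extension (QCVNE) as a QSCF that is QIC, onto more than two alternatives, and yet neither a sharp nor an unsharp quantum dictatorship. So the entire direction of your argument is inverted.

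The concrete place your argument breaks is step (i). You assert that QIC restricted to classical profiles yields classical strategyproofness in the sense needed for Gibbard's 1977 theorem, but the paper's QIC (Definitions~\ref{def_QSMW} and~\ref{def_QSMC}) is far weaker than that: it only forbids a voter from moving society's support across the $0$ and $1$ thresholds (e.g.\ from $\text{Tr}(\Pi^{a\succ a'}(\rho_{soc}))<1$ to $=1$, or from $>0$ to $=0$). It says nothing about manipulations that merely shift probability mass within $(0,1)$, and those are precisely the manipulations Gibbard's randomized theorem must rule out. Indeed QCV's ``give the minority a shot'' step guarantees every voter's pairwise preference receives some positive weight in $\rho_{soc}$, which simultaneously makes the weak and strong negative manipulation cases vacuous (so QIC holds) and destroys any would-be dictatorship (so the conjecture fails). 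Your honest obstacle in step (iii) is therefore moot; the reduction collapses before you get there.
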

    
        Next, we show that there is a Quantum Social Choice Function (QSCF) that satisfies or does not satisfy each property as required. We build upon the the previous success of Sun \textit{et al.} \cite{sun2018quantum} by adapting QCV, which they use to violate a quantum Arrow's theorem. To do so, we create the idea of Quantum Social Choice Extensions (QSCEs), which are tools to utilize Quantum Social Welfare Functions (QSWFs) to create Quantum Social Choice Functions, and a particular extension that allows for some of the useful properties of the welfare function to be maintained.
    
        For clarity, we begin by defining the idea of QSCFs (\Cref{def_QSCF}, QSCEs (\Cref{def_QSCE}), and the particular QSCE we utilize. This QSCE, which we call the Natural Quantum Social Choice Extension (NQSCE, \Cref{def_NQSCE}), is called as such because it essentially maps probability of rankings to the probability of their highest ranked alternative. Once these are defined, it allows for each property on QSCFs to be defined and proven (or disproven) when applied to the composition of QCV and the NQSCE which we call QCV with the Natural Extension (QCVNE).
    
        The first property definition is Quantum Incentive Compatibility, which is the most involved. First, the notion of quantum preference is defined (\Cref{def_QPref}). Then, there are several notions of strategic manipulation that might seem natural, including ideas such as stochastic dominance (SD). We define three different types of strategic manipulation that apply in this setting, and discuss why some others do not apply (including SD). We start by applying these definitions to QSWFs (\Cref{def_QSMW}), and then to QSCFs (\Cref{def_QSMC}). Then, we show that a QSWF that is QIC becomes QIC in the QSCF sense when composed with the NQSCE (\Cref{thm_QIC_QSWFs_are_QIC}) and that QCV is QIC (\Cref{thm_qcv_qic}). Therefore, QCVNE is QIC (\Cref{cor_QCVNE_is_QIC}).
    
        We then extend the definition of quantum dictatorship from \cite{sun2018quantum} to QSCFs, and show that this property is partially maintained over the NQSCE (\Cref{lem_non_sharp_dict_QSWF_still_non_sharp_dict}). Afterward, we show that in the case of QCVNE it is entirely maintained. We also define onto (\Cref{def_onto_qscf}), show that QCVNE satisfies this property, and arrive at our primary result, that QCVNE is a counterexample to \Cref{QGS}.

        \begin{theorem}[Disproof of Quantum Analogue of Gibbard-Satterthwaite]\label{thm_violate_QGS}
            The Quantum Gibbard-Satterthwaite Conjecture (\ref{QGS}) is false.
        \end{theorem}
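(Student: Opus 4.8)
The plan is to refute \Cref{QGS} by exhibiting a single explicit counterexample, namely QCVNE (the composition of Quantum Condorcet Voting with the Natural Quantum Social Choice Extension). Since the conjecture asserts that \emph{every} QIC QSCF that is onto some $A$ with $|A|>2$ is a quantum dictatorship, it suffices to produce one QSCF, for one such $A$, that is QIC, onto, and \emph{not} a quantum dictatorship. I will show QCVNE has all three properties, which immediately falsifies the conjecture.

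First I would fix an alternative set $A$ with $|A|>2$ and note that QCVNE is a genuine QSCF in the sense of \Cref{def_QSCF}, being the composition of the QSWF QCV with the NQSCE of \Cref{def_NQSCE}. For incentive compatibility, I would invoke \Cref{thm_qcv_qic}, which establishes that QCV is QIC as a QSWF, together with \Cref{thm_QIC_QSWFs_are_QIC}, which transports QIC across the NQSCE into the QSCF setting; this is exactly \Cref{cor_QCVNE_is_QIC}, so QCVNE is QIC. For the onto requirement, I would appeal to the verification that QCVNE meets \Cref{def_onto_qscf}: because the NQSCE maps the amplitude/probability of a ranking to that of its top alternative, and every alternative of $A$ is the top of some ranking reachable in a QCV output, every alternative of $A$ is reachable under QCVNE.

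For non-dictatorship, I would start from the established fact \cite{sun2021schrodinger, sun2018quantum} that QCV is not a quantum dictatorship as a QSWF, then apply \Cref{lem_non_sharp_dict_QSWF_still_non_sharp_dict} to see that (at least a weakened form of) non-dictatorship survives composition with the NQSCE, and finally argue that in the concrete case of QCVNE the non-dictatorship is preserved in full, matching the notion of quantum dictatorship defined on QSCFs. Assembling the three points, QCVNE is a QIC QSCF that is onto $A$ with $|A|>2$ yet is not a quantum dictatorship, contradicting \Cref{QGS}; hence the conjecture is false.

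I expect the main obstacle to be precisely this non-dictatorship step. ``Not a dictatorship'' is a negative, fragile property, and collapsing full rankings to their top alternative via the NQSCE could in principle erase the structure that witnesses non-dictatorship for QCV. The delicate part is showing that the specific obstruction certifying that QCV is not a dictatorship is not washed out by the NQSCE — i.e. upgrading \Cref{lem_non_sharp_dict_QSWF_still_non_sharp_dict} to the statement needed for QCVNE — while keeping the QSCF-level definition of quantum dictatorship compatible with the QSWF-level one it is derived from. The remaining steps (QIC and onto) should follow essentially by invoking the cited results and a short check against the relevant definitions.
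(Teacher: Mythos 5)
Your outline matches the paper's proof almost exactly: QCVNE as the counterexample, QIC via \Cref{cor_QCVNE_is_QIC}, onto via the unanimity step (all voters submitting a basis ballot topped by $a$ forces $\rho_{soc}=\ket{a\succ\dots}$ and hence $\Lambda(\rho_{soc})=\ket{a}$), and sharp non-dictatorship via \Cref{lem_non_sharp_dict_QSWF_still_non_sharp_dict} together with \Cref{thm_violate_arrows}. The one place you stop short is exactly the place you flag: ruling out \emph{unsharp} dictatorship. Note, however, that your proposed route for closing it --- upgrading \Cref{lem_non_sharp_dict_QSWF_still_non_sharp_dict} to the unsharp case --- is not what the paper does; indeed the paper explicitly leaves the general unsharp analogue of that lemma as an open question. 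Instead it gives a short direct argument specific to QCVNE: for any purported dictator $v_i$, some \emph{other} voter can submit a ballot with support for a ranking topped by $a$; the ``give the minority a shot'' step (step 5 of QCV) then forces $\rho_{soc}$ to have support on that ranking, so $\text{Tr}(\Pi^{a}(\alpha))>0$ by \Cref{def_NQSCE} even when $\text{Tr}(\Pi^{a}(\rho_i))=0$, breaking the biconditional in \Cref{def_dict_qscf}. You should supply this (or an equivalent) concrete argument rather than leaving the step as ``argue that non-dictatorship is preserved in full''; with that addition your proof is complete and coincides with the paper's.
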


    \textit{This work is at the intersection of several different fields, and uses terminology from each of them. For the readers convenience, Appendix \ref{app_ref_table} contains a table of many of the useful terms, symbols, and acronyms as a quick reference. Appendix \ref{app_defns} contains more formal definitions and theorems that are referenced throughout the paper. Depending on the reader's background, they may be already familiar with these. These definitions and theorems were omitted from the main text for the sake of flow. Many are either fairly intuitive or not strictly required for comprehension.}

\section{Preliminaries}
    To properly discuss attempting to violate a quantum analogue of Gibbard-Satterthwaite's Impossibility Theorem, a preliminary discussion on Classical Voting Systems, Social Choice and Welfare Functions, their properties, Arrow's Theorem, and the quantum analogues of each of the preceding topics is required.

    \subsection{Classical and Quantum Voting Systems}

        An understanding of classical voting systems is necessary to be able to properly discuss quantum voting systems. First is the definition of the participants of a voting system, and what they are voting on. Let $n$ be the number of voters, and $i$ be used to designate the $i$-th voter, denoted as $v_i$. Let $m$ be the number of alternatives. We define $A = \{a_1, \dots, a_m\}$ to be the set of alternatives and $a_j \in A$ as the $j$-th alternative. It is important to clarify how a voter votes, and the result of voting. We will discuss the first next, in the form of ballots.

        \subsubsection{Ballots}
             For any voting system, the type of ballot must be determined. For the purposes of this discussion, voting will be performed by a ballot formed from a single ranking among the possible alternatives, determining preference. This forms the input to our voting system. This also allows an easy extension to compare the utility of two alternatives for a specific voter, as it simply becomes proportional to the relative ranking between those alternatives by a voter (See Definition \ref{def_ranking}). These rankings for a set are denoted as $\mathcal{L}(A)$.

    
            In a quantum setting, a paradigm shift is required. Ballots are not a single ranking, but a quantum state of rankings, each forming a basis vector.  This allows for probabilistic and correlated voting (between voters and also alternatives).

            The authors acknowledge that many readers are unfamiliar with quantum objects, so we present Table \ref{tab_classical_vs_quantum} as a rough (mathematically loose) intuition for quantum objects in a voting system vs their classical counterparts.  Additionally, for intuition as to how a quantum voting system would function, see Figure \ref{fig_qcv_flow} where we explain Quantum Condorcet Voting.

            \begin{table}[ht]
            \centering
                \begin{tabular}{c|c|c|c|c}\label{tab_classical_vs_quantum}
                                & Classical & Classical Example & Quantum & Quantum Example \\ \hline
     Voters                     & $v_i$     & \{1, ..., $n$\}       & $v_i$   & \{1, ..., $n$\}   \\ \hline
     Alternatives               & $A$       & \{1, ..., $m$\}       & $A$     & \{1, ..., $m$\}   \\ \hline
     
     Voter ballots              & Ranking over $A$ & $1 \succ 3 = 4 \succ 2$ & \shortstack{Quantum state $\rho_i$ over\\all possible rankings\\over $A$} & \shortstack{$\rho_i = \frac{1}{\sqrt{2}}(\ket{1 \succ 3 = 4 \succ 2} $\\$ + \ket{3 = 2 \succ 4 \succ 1})$} \\ \hline
     
     Preference                 & \shortstack{``$v_i$ prefers $a$ \\over $b$" \\(\textit{sharp preference})} & $a \succ b$ & \shortstack{``$v_i$ prefers $a$ over $b$ with \\nonzero probability $p$" \\ ``$v_i$ has support for\\$a$ over $b$"} & $Tr\left(\Pi^{a \succ b}(\rho_i)\right) = p$ \\ \hline
     
     \shortstack{Social Welfare \\Function} & \shortstack{Maps voter\\ballots to\\final societal\\ranking} & \shortstack{Borda, Condorcet,\\Plurality, etc.} & \shortstack{Maps quantum voter\\ballots to final\\quantum state (which is\\ essentially a\\probability distribution\\over rankings)} & \shortstack{Quantum majority rule \\(unintuitive name since\\classical majority rule\\is not a welfare\\function) \cite{bao2017quantumarrows},\\Quantum Condorcet\cite{sun2021schrodinger}} \\ \hline
                 
                \end{tabular}~\newline
            \caption{Mathematically loose comparison of classical and quantum voting terminology.}
            \end{table}
            
            One important thing to note is that it is easy to interpret quantum voting as just probabilistic voting from this table. This is not technically correct. While society's ballot produced from a quantum social welfare function does indeed look exactly equal to the space of joint probability distributions, that fails to capture the essence of quantum voting. In particular, how voters are allowed to vote, how votes get combined, and how voters can easily create ``joint" distributions together (using the property of quantum entanglement). See \cite{bao2017quantumarrows} for an introductory discussion of quantum voting tactics and Appendix \ref{sec_intro_probvsquant} for our detailed discussion on probabilistic vs quantum voting. In short, calling quantum voting the same as probabilistic voting both minimizes the structure in place that allows voters to perform powerful actions or create very large joint probabilistic distributions in polynomial (or less) time and space and disregards inherently quantum inputs and outputs in the voting system.
            
            We now present the quantum voting formalisms, many of which are found in \cite{sun2021schrodinger}.
            
                \begin{definition}[Hilbert Space for Ordinal Rankings \cite{sun2021schrodinger}]\label{def_hilbert_rankings}
                    Let $\mathfrak{R}$ be a Hilbert space of dimension $|\mathcal{L}(A)|$. That is, $\mathfrak{R} = \mathbb{C}^{|\mathcal{L}(A)|}$, with each linear ranking $R \in \mathcal{L}(A)$ as a basis vector. We define $\mathfrak{R}_i$ to be an isomorphic Hilbert space to $\mathfrak{R}$ associated with voter $v_i$.
                \end{definition} 
                
                \begin{definition}[Hilbert Space of Alternatives]\label{def_hilbert_alternatives}
                    Let $\mathcal{A}$ be to Hilbert space of dimension $|A|$. That is, $\mathcal{A} = \mathbb{C}^{|A|}$, with each alternative $a \in A$ as a basis vector.
                \end{definition}
    
                \begin{definition}[Density Operators and Quantum Ballot Profiles \cite{sun2021schrodinger}]\label{def_density_op}
                    We define $D(H)$ as the set of density operators on the Hilbert Space $H$, and a \textit{Quantum Ballot} $\rho_i \in D(\mathfrak{R}_i)$ (specific voter's ballot) is a specific density operator on $\mathfrak{R}_i$.  Additionally, we define a \textit{Quantum Ballot Profile} $\rho \in D(\mathfrak{R}_1 \otimes \dots \otimes \mathfrak{R}_n)$ (collective ballot profile).
                \end{definition}

            This allows for submission of ballots as quantum states, and correspondingly quantum voting. This in turn allows movement beyond some of the limitations of classical ballot based systems. Performing operations on this space requires more definitions that allow us to reason about these quantum ballots.
            
                \begin{definition}[Projection \cite{sun2021schrodinger}]\label{def_projection}
                    Consider a pair $(x, y)$ of alternatives.  $\mathfrak{R}$ decomposes into subspaces associated with the possible relationships between $x$ and $y$.  To address these subspaces, we formally denote the subspace spanned by the elements of $\mathfrak{R}$ that encode $x \succ y$ (such as $|x \succ y \succ z \rangle, |z \succ x \succ y \rangle$) by $S^{x \succ y}$.  We denote $\Pi^{x \succ y}$ as the projection operator to the subspace $S^{x \succ y}$.
                \end{definition}
                \begin{definition}[Trace \cite{sun2021schrodinger}]\label{def_trace}
                    The trace of a space or subspace is defined as the probability of measuring a particular outcome on a particular instantiation of $\mathfrak{R}$.  Its use will look something like this: Tr$(\Pi^{x \succ y}(\text{Tr}_{\ne i}(\rho)))$ which reads (from innermost to outermost) ``looking at the quantum ballot profile $\rho$, looking specifically at the part of the ballot contributed to by voter $v_i$, then looking at how they ranked $x$ comparatively with $y$, and finally reading out the probability that they gave to the ranking $x$ over $y$."  We often use $\rho_i = \text{Tr}_{\ne i}(\rho)$.
                \end{definition}

            
            The projection operator allows us to study how two alternatives are comparatively ranked without being influenced by the rest of the elements. This is an inherent part of how QCV is formulated by Sun \textit{et al.} and when combined with trace allows these states to be reasoned within a simpler mathematical context.

            
        \subsubsection{Quantum Social Welfare Functions (QSWFs)}
            Quantum versions of the two primary function classes become necessary. In the literature, only the Quantum Social Welfare Function case has been well defined. This is as follows:
        
                \begin{definition}[Quantum Social Welfare Function (QSWF) \cite{sun2021schrodinger}]
                    A \textit{Quantum Social Welfare Function} is a function $\mathcal{E}: D(\mathfrak{R}_1 \otimes \dots \otimes \mathfrak{R}_n) \to D(\mathfrak{R})$, which can then be measured to determine a single ranking. We denote $\rho_{soc} = \mathcal{E}(\rho_1 \otimes \dots \otimes \rho_n)$ to be the resulting societal ballot.
                \end{definition}

            Quantum Social Choice Functions (QSCFs) are defined later (\Cref{def_QSCF}), as they have not been defined in the literature. Now, we can move to properties of each of these functions, and quantum counterparts where such counterparts have been established.

    \subsection{Properties of Classical and Quantum Voting Systems}
        As previously mentioned, voting systems have a set of axioms or properties that may or may not hold for each system. These allow for methods of comparison between them. There are many of these, and some relevant ones such as those present in Sun \textit{et al.} \cite{sun2021schrodinger} are in Appendix \ref{app_defns}. Those relevant to our goal are listed bellow.  The classical versions are pulled from various literature mentioned previously, and the quantum definitions are pulled from \cite{sun2021schrodinger}. Unanimity and IIA, which are not used in this work, are present in Appendix \ref{app_defns}.


        \subsubsection{Dictatorship} 
            The main property we will discuss that applies to SWFs, and the first to apply to SCFs, is Dictatorship. Classically, a dictatorship is defined as follows.
             
                \begin{definition}[Dictatorship for SWF]
                    An SWF $F$ is a \textit{Dictatorship} if there exists a $v_i$ such that \newline
                    $\forall (R_1, \dots, R_n) \in \mathcal{L}(A)^{n}$ where $F(R_1, \dots, R_i, \dots, R_n) = \prec$ then for all pairs of distinct candidates $(a_j, a_k)$,  $a_j \prec_i a_k \Rightarrow a_j \prec a_k.$
                \end{definition}
                
                \begin{definition}[Dictatorship for SCF]
                    An SCF $f$ is a \textit{Dictatorship} if there exists a $v_i$ such that \newline
                    $\forall (R_1, \dots, R_n) \in \mathcal{L}(A)^{n}$ where $f(R_1, \dots, R_i, \dots, R_n) = a_j,$ $\forall a_k, a_k \neq a_j \Rightarrow a_k \prec_i a_j.$
                \end{definition}

            This property is fairly intuitive, and the reasons for desiring a SWF or SCF to not have this property is as well. In a fair voting system, the final outcome should not be determined by a single voter. If it is, that single voter is the only person that needs to vote anyway.

            Such an idea has a quantum equivalent. Both Bao and Halpern \cite{bao2017quantumarrows} and Sun \textit{et al.} \cite{sun2021schrodinger} define this property as it applies to QSWFs.

                \begin{definition}[Quantum Dictatorship for QSWFs \cite{sun2021schrodinger}]~\newline
                    A QSWF $\mathcal{E}$ satisfies \textit{Sharp Dictatorship} if there is a voter $v_i$ such that for all quantum ballot profiles $\rho$  and all pairs of alternatives $(a, a')$:
                    \begin{itemize}
                        \item $\text{Tr}(\Pi^{a \succ a'}(\mathcal{E}(\rho))) \Leftrightarrow \text{Tr}(\Pi^{a \succ a'}\rho_i) = 1$
                    \end{itemize}
                    A QSWF $\mathcal{E}$ satisfies \textit{Unsharp Dictatorship} if there is a voter $v_i$ such that for all quantum ballot profiles $\rho$  and all pairs of alternatives $(a, a')$,
                    \begin{itemize}
                        \item $\text{Tr}(\Pi^{a \succ a'}(\mathcal{E}(\rho))) > 0 \Leftrightarrow \text{Tr}(\Pi^{a \succ a'}\rho_i) > 0$
                    \end{itemize}
                \end{definition}

            Otherwise stated, quantum sharp dictatorship occurs when the QSWF prefers alternative $x$ to alternative $y$ with certainty if only if the dictator does. Unsharp is defined similarly, but whenever the dictator prefers alternative $x$ to $y$ with nonzero probability.

            This leads us to the second property needed for Gibbard-Satterthwaite as it applies to SCFs.

        \subsubsection{Incentive Compatible} 
            The second and final property we will define on SCFs is the axiom of Incentive Compatibility. Classically, this is defined as:
             
                \begin{definition}[Incentive Compatible \cite{alggametheorybook}]
                    An SCF $f$ is \textit{Incentive Compatible} if it cannot be strategically manipulated. An SCF $f$ can be \textit{Strategically Manipulated} if $\exists v_i$ such that for some
                    $(R_1, \dots, R_n) \in \mathcal{L}(A)^{n}$ and some $R_i' \in \mathcal{L}(A)$ where $a_j \prec_i a_k$ but $f(R_1, \dots, R_i, \dots, R_n) = a_j$ and $f(R_1, \dots, R_i', \dots, R_n) = a_k$, we have that $ \Rightarrow a_j \prec a_k.$
                \end{definition}
             
            A natural interpretation of strategic voting is that a voter can ensure a preferred alternative can be chosen by strategically misrepresenting their vote. Incentive compatibility then requires that it is at least as good to report an accurate, truthful ranking from the voters perspective as a non-truthful one.

    \subsection{Arrow's Theorem and Violation in a Quantum Setting}
        These axioms allow us to compare different voting systems, but unfortunately this results in interactions between these properties that are not always desirable. A well known such result is Arrow's Impossibility Theorem (\Cref{thm_arrows}).
        
        This result tells us that no classical SWFs can have these desirable properties (IIA, Unanimity, and non-Dictatorship) while allowing for more than two alternatives (which is required for many interesting cases). An ability to violate this theorem is therefore desirable.
        
        Recent works have suggested that a quantum system of voting are a method of bypass. In two recent works \cite{bao2017quantumarrows, sun2021schrodinger}, quantum analogs of Arrow's Theorem have been defined, though slightly differently. In both works, their definition of such a quantum analogue is then shown to be bypassed (for one such definition, see Conjecture \ref{conj_q_arrows}). In the second, they use what they call Quantum Condorcet Voting.
        

    \subsection{Quantum Condorcet Voting (QCV)}
        Condorcet voting is a well known classical voting system based on the comparison of two alternatives across different ballots. This quantum voting system is related in some sense to that system, and is defined as follows. For a visual walkthrough of QCV, please see Figure \ref{fig_qcv_flow}.

        \begin{figure}
            \centering 
            \includegraphics[scale=0.5]{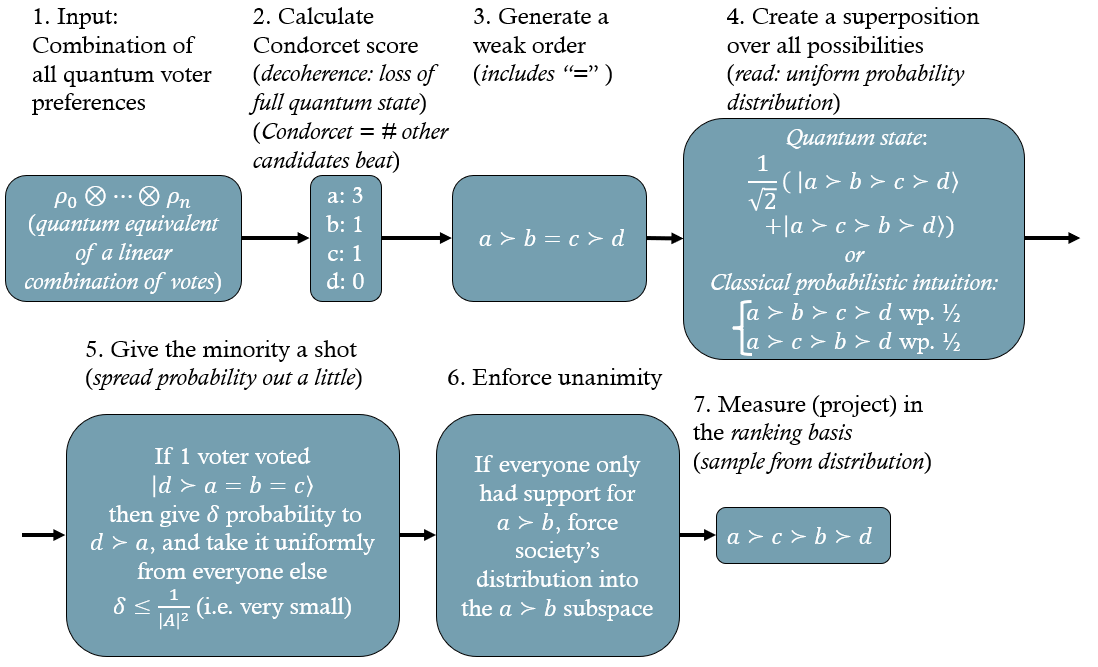}
            \caption{An intuition-based walk-through of QCV (mathematically loose).}
            \label{fig_qcv_flow}
        \end{figure}

            \begin{definition}[Quantum Condorcet Voting \cite{sun2021schrodinger}]
                Let $\rho_1 \otimes...\otimes \rho_n$ be a basis quantum ballot profile.  The quantum Condorcet voting QSWF $\mathcal{E}_{qcv}$ operates in the following steps:
    
                \begin{enumerate}
                    \item Calculate the \textit{Condorcet Score} of each alternative according to $\rho_1 \otimes ... \otimes \rho_n$.  The \textit{Condorcet Score} of an alternative is the number of times that alternative wins in all pairwise comparisons with other alternatives.  Formally, for an alternative $x$, their \textit{Condorcet Score} $S_c(x) = |\{y \in A : |\mathcal{V}^R_{x \succ y}| \ge |\mathcal{V}^R_{y \succ x}| \}|$ where R is the classical ballot profile corresponding to $\rho_1 \otimes ... \otimes \rho_n$.
                    \begin{itemize}
                        \item See Appendix \ref{app_defns} for formal definition of $\mathcal{V}$, which represents the set of voters.
                    \end{itemize}
    
                    \item Generate a weak order $\succeq$ over all alternatives according to their Condorcet score.  Formally, $x \succeq y$ iff $S_c(x) \ge S_c(y)$.
                    \begin{itemize}
                        \item This generates a ranking that could look something like $a \succ b = c \succ d$.
                    \end{itemize}
    
                    \item Complete the weak order.  Formally, generate the set $\{\succ^1, ..., \succ^w\}$ in which each $\succ^i$ is a linear order that extends $\succeq$, and $\{\succ^1, ..., \succ^w\}$ contains all extensions of $\succeq$.
                    \begin{itemize}
                        \item This transforms the ranking from step (2) to something that looks like this: $\{\succ^1, \succ^2\}$ where $\succ^1 = a \succ b \succ c \succ d$ and $\succ^2 = a \succ c \succ b \succ d$.
                    \end{itemize}
    
                    \item Transform the linear order into a quantum state.  Formally, for $\{\succ^1, ..., \succ^w\}$ we create a quantum state $\sigma^1 = \frac{1}{w} \sum_i \sigma_i$, where each $\sigma_i$ is the basis ballot that corresponds to $\succ^i$.
                    \begin{itemize}
                        \item Finally, we generate a quantum state out of this, creating a quantum superposition of (equally distributed probabilities across) all $\succ^i$.
                    \end{itemize}
    
                    \item \textit{Give the minority a shot}.  For any alternative pair $(x, y)$ which is encoded by at least one $\rho_i$ (individual voter's ballot), we ``spread" an amount $\delta \in (0, 1)$ of weight across the $x \succ y$ subspace.  Specifically, $\sigma^1$ is changed to $\sigma^2 = (1-k\delta)\sigma^1 + \delta\Omega^{x_1 \succ y_1} + ... + \delta\Omega^{x_k \succ y_k}$, where $(x_1, y_1), ..., (x_k, y_k)$ ranges over all alternative pairs that are encoded by at least one $\rho_i$.  The $\delta$ parameter is subject to $\delta < \frac{1}{|A|^2}$.
    
                    \item \textit{Enforce unanimity}. For any alternative pair $(x, y)$ that is encoded by all the $\rho_i$, we project $\sigma^2$ onto the $x \succ y$ subspace. Formally, $\sigma^2$ is transformed to $\sigma^3 = \frac{\Pi^{x_k \succ y_k}...\Pi^{x_1 \succ y_1} \sigma^2 \Pi^{x_1 \succ y_1}...\Pi^{x_k \succ y_k}}{\text{Tr}(\Pi^{x_k \succ y_k}...\Pi^{x_1 \succ y_1} \sigma^2)}$, where $(x_1, y_1), ..., (x_k, y_k)$ ranges over all alternative pairs that are encoded by all the $\rho_i$.
                \end{enumerate}
            \end{definition}

        Both \textit{giving the minority a shot} and \textit{enforcing unanimity} are first introduced by \cite{bao2017quantumarrows} and affirmed by \cite{sun2021schrodinger}.  They look strange at first, but are required and useful in proving that QCV disproves the quantum formulation of Arrow's Impossibility Theorem.

        This brings us to the result that inspires this work, and the main results of the previous works in this area \cite{bao2017quantumarrows, sun2021schrodinger}. Specifically, \cite{sun2021schrodinger} shows the following result.         
        
            \begin{theorem}[QCV Bypasses the Quantum analogue of Arrow's Theorem \cite{sun2021schrodinger}] \label{thm_violate_arrows}
                QCV fulfills the properties of sharp and unsharp unanimity, sharp and unsharp IIA, and is not a sharp or unsharp dictatorship, and as such bypasses the Quantum analogue of Arrow's Theorem.
            \end{theorem}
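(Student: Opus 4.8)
The plan is to verify each of the six listed properties directly from the six-step construction of $\mathcal{E}_{qcv}$, treating QCV as defined on basis ballot profiles and extended by linearity to all of $D(\mathfrak{R}_1 \otimes \dots \otimes \mathfrak{R}_n)$. In parallel with the sharp/unsharp dictatorship definitions, I read the sharp properties as statements about certainty ($\text{Tr} = 1$) and the unsharp ones as statements about possibility ($\text{Tr} > 0$). The organizing observation is that every projection $\Pi^{x \succ y}$ is diagonal in the ranking basis, since it projects onto the span of exactly those basis rankings placing $x$ above $y$; hence all such projections mutually commute. Consequently Step 6 is nothing more than projection onto the subspace $U := \bigcap_j S^{x_j \succ y_j}$ of rankings consistent with every unanimously-encoded oriented pair, followed by renormalization, with the order of the $\Pi$'s irrelevant. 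The substance of the proof is then two facts about $U$, which I would isolate as a lemma before touching any axiom.

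First I would show Step 6 is well defined: the unanimously-encoded oriented pairs form a strict partial order contained in each voter's ranking, and since a unanimous $x \succ y$ forces $S_c(x) \ge S_c(y)$ and hence $x \succeq y$ in the Condorcet weak order of Step 2, this partial order is compatible with $\succeq$. Therefore $\succeq$ admits a linear extension respecting all unanimous pairs, so $U \ne \{0\}$ and $\sigma^2$ (which by Step 4 carries positive weight on every extension of $\succeq$) has $\text{Tr}(\Pi^U \sigma^2) > 0$, keeping the denominator nonzero for $\delta < |A|^{-2}$. Second, and more importantly, I would show that minority support survives Step 6: if at least one voter encodes $a \succ b$, that voter's own ranking lies in $U \cap S^{a \succ b}$, so this intersection is nonzero; since Step 5 deposited weight $\delta\,\Omega^{a \succ b}$ supported on $S^{a \succ b}$, a nonzero fraction of it projects into $U$, whence $\text{Tr}(\Pi^{a \succ b}\mathcal{E}_{qcv}(\rho)) > 0$. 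This observation — that a dissenting voter's ballot itself witnesses the nonemptiness of the relevant intersection — is the crux, and the step I expect to be the main obstacle to make fully rigorous, since it is exactly where the Condorcet-score machinery of Steps 1--4 and the subspace surgery of Steps 5--6 must be reconciled.

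With the lemma in hand the six claims fall out quickly. For sharp unanimity, if every voter is certain of $a \succ b$ then $a \succ b$ is unanimously encoded, Step 6 projects into $S^{a \succ b}$, and society is certain; unsharp unanimity follows from Step 5 together with the survival lemma. For the two IIA claims I would argue that the construction makes the output's pairwise support structure a function of the input's pairwise structure alone: society is certain of $a \succ b$ iff no voter encodes $b \succ a$ (forward via Step 6, converse via the survival lemma applied to $b \succ a$), and society has nonzero support for $a \succ b$ iff some voter encodes it (forward via the survival lemma, converse because otherwise all voters are certain of $b \succ a$ and Step 6 annihilates $S^{a \succ b}$). Both characterizations depend only on how the voters compare $a$ with $b$, which is precisely sharp and unsharp IIA.

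Finally, for non-dictatorship I would exhibit, for each candidate dictator $v_i$, a disagreeing profile on $m \ge 3$ alternatives. Taking $v_i$ certain of $a \succ b$ while some $v_j$ ($j \ne i$) is certain of $b \succ a$, the survival lemma gives society nonzero support for both orientations, so society is not certain of $a \succ b$ although $v_i$ is, refuting sharp dictatorship; reversing $v_i$'s preference so that $v_i$ has zero support for $a \succ b$ while $v_j$ encodes it gives society nonzero support that the dictator lacks, refuting unsharp dictatorship. Since these break both dictatorship conditions for every $v_i$, and QCV simultaneously satisfies sharp and unsharp unanimity and IIA with $|A| \ge 3$, it is a counterexample to the quantum analogue of Arrow's Theorem (\Cref{conj_q_arrows}). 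The remaining routine point is that the linear extension from basis profiles preserves these support-level statements, which I would dispatch by additivity of $\text{Tr}(\Pi^{\cdot}\,\cdot)$ over the defining decompositions.
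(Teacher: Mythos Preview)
The paper does not contain a proof of this statement; \Cref{thm_violate_arrows} is quoted from \cite{sun2021schrodinger} as background and is invoked later only as a black box (in the proof of \Cref{thm_violate_QGS}). There is therefore no in-paper argument to compare against.

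That said, your outline tracks the structure of the original argument in \cite{sun2021schrodinger}: Step~6 (\emph{enforce unanimity}) yields sharp unanimity directly; Step~5 (\emph{give the minority a shot}) supplies unsharp unanimity and the profiles refuting dictatorship; and the technical heart is precisely your survival lemma --- that any voter's own basis ranking lies in $U \cap S^{a \succ b}$ and thus witnesses that the $\delta\Omega^{a \succ b}$ weight deposited in Step~5 is not annihilated by the projection in Step~6. Your two IIA characterizations (society certain of $a \succ b$ iff no voter encodes $b \succ a$; society supports $a \succ b$ iff some voter does) are the same ones used in the source.

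One caution: the closing sentence that linear extension from basis profiles ``preserves these support-level statements'' by additivity of trace is less routine than you indicate once the joint profile $\rho$ is entangled. The hypotheses of unanimity and IIA are phrased in terms of the marginals $\rho_i = \text{Tr}_{\neq i}(\rho)$, and for entangled $\rho$ the basis-profile decomposition of $\rho$ need not factor through decompositions of the individual $\rho_i$. You should make explicit which basis term in the global decomposition carries the positive weight you need; this is a point on which \cite{sun2021schrodinger} is itself somewhat informal, but it deserves a sentence rather than a dismissal.
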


    \subsection{Gibbard-Satterthwaite}
        Our final definition is directly related to our goal, as it is the classical definition of GS. Similarly to Arrow's Impossibility Theorem, GS is a well known result on SCFs, sometimes as a corollary to Arrow's Theorem.
         
            \begin{theorem}[Gibbard-Satterthwaite \cite{alggametheorybook}]
                Let $f$ be an inventive compatible social choice function onto $A$, where $|A| > 2$, then $f$ is a dictatorship.
            \end{theorem}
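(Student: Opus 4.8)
The plan is to reduce this classical statement to Arrow's Impossibility Theorem (\Cref{thm_arrows}), which is already available to us. From the incentive compatible, onto social choice function $f : \mathcal{L}(A)^n \to A$ I would manufacture a social welfare function $F$, show that $F$ is a genuine (complete, transitive) aggregator satisfying independence of irrelevant alternatives and unanimity, invoke Arrow to obtain a dictator of $F$, and then transport that dictator back to $f$. Two enabling lemmas come first. The crucial one is \emph{strong monotonicity}: if $f(R)=a$ and $R'$ is a profile in which, for every voter, no alternative moves from below $a$ to above $a$ (each voter's set of alternatives ranked below $a$ does not shrink), then $f(R')=a$. I would prove this by changing ballots one voter at a time and applying incentive compatibility at each step, since any change of the elected alternative away from $a$ would hand the voter who moved a profitable deviation. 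The second lemma is \emph{weak Pareto}: combining onto-ness (which furnishes a profile electing each alternative) with strong monotonicity, one shows that if every voter ranks $a$ above $b$ then $f$ never elects $b$.

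Next I would build $F$. For each unordered pair $\{a,b\}$ and profile $R$, let $R^{ab}$ denote the profile in which every voter lifts $a$ and $b$ into the top two positions while preserving their original relative order; since $a$ and $b$ sit on top for everyone, weak Pareto forces $f(R^{ab})\in\{a,b\}$. Declare $a$ to be socially ranked above $b$, written $a\succ_F b$, exactly when $f(R^{ab})=a$. Completeness and antisymmetry are then immediate, and unanimity of $F$ is direct from weak Pareto. Independence of irrelevant alternatives follows from strong monotonicity: two profiles agreeing on every voter's $a$-versus-$b$ order yield $R^{ab}$ profiles that differ only in the arrangement of the alternatives sitting below the top two, and such a reordering never shrinks the lower contour set of whichever of $a,b$ wins, so $f(R^{ab})$ depends only on how each voter orders $a$ against $b$.

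The substantive step is transitivity of $F$, and this is where I expect the real work to lie. To rule out a cycle, I would lift a triple $a,b,c$ to the top three positions of every ballot (preserving their relative order) to form $R^{abc}$, so that weak Pareto gives $f(R^{abc})\in\{a,b,c\}$. If, say, $f(R^{abc})=a$, then passing from $R^{abc}$ to $R^{ab}$ and to $R^{ac}$ never shrinks $a$'s lower contour set, so strong monotonicity yields $f(R^{ab})=a$ and $f(R^{ac})=a$; thus the winner of the three-way contest is socially ranked above the other two. A cycle $a\succ_F b\succ_F c\succ_F a$ would then contradict the antisymmetry already established, whichever of the three $f$ selects on $R^{abc}$. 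The delicate bookkeeping here, reconciling a single three-way choice with the three pairwise verdicts while tracking exactly which ballot reorderings leave the winner fixed, is the core of the argument and the main obstacle.

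With $F$ shown to be a complete, transitive welfare function satisfying independence of irrelevant alternatives and unanimity on $|A|>2$ alternatives, \Cref{thm_arrows} supplies a voter $v_i$ who dictates $F$, so that $a\succ_F b \Leftrightarrow a\succ_i b$ on every profile and pair. It remains to see that $v_i$ dictates $f$ itself. Fix a profile $R$ and let $w=f(R)$. Passing from $R$ to $R^{wb}$ for any $b\neq w$ never shrinks $w$'s lower contour set (if a voter had $w$ above $b$ then $w$ rises to the top, and if a voter had $b$ above $w$ then only $b$ ends up above $w$, while everything formerly below $w$ stays below), so strong monotonicity gives $f(R^{wb})=w$, i.e. $w\succ_F b$ for all $b$. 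Hence $w$ is the top of $F(R)$, which by the dictatorship of $F$ is precisely $v_i$'s most-preferred alternative. Therefore $f(R)$ is always $v_i$'s top choice, so $f$ is a dictatorship, completing the reduction.
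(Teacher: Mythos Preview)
The paper does not actually prove this theorem: it is stated with a citation to \cite{alggametheorybook} and the surrounding text merely remarks that Gibbard--Satterthwaite is ``sometimes as a corollary to Arrow's Theorem.'' So there is no paper proof to compare against in any detailed sense.

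That said, your proposal is a correct and standard argument, and it is precisely the reduction-to-Arrow route the paper alludes to. The two enabling lemmas (strong monotonicity from incentive compatibility, weak Pareto from onto plus monotonicity) are right, and your one-voter-at-a-time sketch for monotonicity is the usual one; just be aware that the contradiction actually requires applying incentive compatibility at \emph{both} profiles (once with truthful ballot $R_i$, once with truthful ballot $R_i'$) to force $a\succ_i b$ and $b\succ'_i a$ simultaneously, which then collides with the lower-contour-set containment. Your construction of $F$ via $R^{ab}$, the transitivity argument via $R^{abc}$, and the final step showing $f(R)$ tops $F(R)$ are all sound. The lower-contour-set bookkeeping in the last step is exactly as you describe: if $b\succ_i w$ then $b$ was already outside $w$'s lower contour in $R$, so moving $b$ above $w$ in $R^{wb}$ costs nothing. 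Nothing is missing.
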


        A quantum analogue of this does not yet exist in the literature, and as such we present our definition and justification of GS in a quantum setting.

\section{Results}
    In this work, we first define a way to emulate SCFs in the quantum setting. Second, we formulate a quantum analoguefor Incentive Compatibility. Third, we use this property along with a formulation of onto to formulate a quantum analogueof Gibbard-Satterthwaite's Impossibility Theorem. Finally, we show that a natural extension of QCV that forms an emulated SCF is incentive compatible, onto and maintains the property of non-dictatorship, and correspondingly violates this quantum analog.

    \subsection{Quantum Social Choice Functions (QSCFs)}
        First, we will define a Quantum analogue for SCFs in the same sense that QSWFs are defined.
         
            \begin{definition}[Quantum Social Choice Function (QSCF)]\label{def_QSCF}
                A \textit{Quantum Social Choice Function} is a function \newline $\xi: D(\mathfrak{R}_1 \otimes \dots \otimes \mathfrak{R}_n) \to D(\mathcal{A})$. That is, a function from the joint state formed by all voters' quantum ballots (density operators on their Hilbert Spaces) to density operators on the Hilbert Space of Alternatives. We denote $\alpha_{soc} = \mathcal{\xi}(\rho_1 \otimes \dots \otimes \rho_n)$ to be the resulting societal preference.
            \end{definition}

            \begin{definition}[Subspace for a Winner] \label{def_sub_winner}
                    A related useful definition is the subspace of a particular winner within the set of alternatives. Consider that a winner for a ranking must be ranked over every other alternative. To find if there is support for a candidate $a$ to be ranked over two candidates $b$ and $c$, you would take the intersection of the subspaces for each condition (that is, the condition of $a \succ b$ and $a \succ c$), which results in a subspace of $\mathfrak{R}$ where both conditions are true. Support on this subspace would then indicate support for the alternative $a$ being the winner of the measured ranking (though not necessarily 100\% support). This allows for more simple comparison of $D(\mathcal{A})$ and $D(\mathfrak{R})$.
                    \begin{equation*}
                        \mathcal{S}^{a} := \bigcap_{a' \neq a}\mathcal{S}^{a \succ a'}\\
                    \end{equation*}
                    We denote $\Pi^a$ to be the projection operator to this subspace.  It is important to note an interpretational difference in this context between $Tr(\Pi^a(\cdot))$ and $Tr(\Pi^{a \succ a'}(\cdot))$.  While $Tr(\Pi^a(\cdot))$ is interpreted as the probability of $a$ being the highest ranked alternative, $Tr(\Pi^{a \succ a'}(\cdot))$ is interpreted as the probability of $a$ being ranked over $a'$.
            \end{definition}

        An example of a QSCF $\xi$ would be selecting the ballot of the first voter (dictatorship) followed by a mapping of support on each basis vector (ranking) to support on that basis vector's highest ranked alternative. A use of this function might look like the following:
        \begin{align*}
            \rho_1 &= \frac{1}{\sqrt{2}} \left( \ket{x \succ y \succ z} + \ket{y \succ x \succ z} \right)\\
            \rho_2 &= \ket{z \succ x \succ y} \\
            \xi(\rho_1 \otimes \rho_2) &= \sum_{a \in A}Tr(\Pi^{a}\rho_1) \ket{a} = \frac{1}{\sqrt{2}} \left(\ket{x} + \ket{y}\right)
        \end{align*}

        \subsubsection{Quantum Social Choice Extensions}

            As a brief aside, we will discuss a particular way to form a Quantum Social Choice Function by utilizing an auxiliary function composed with a Quantum Social Welfare Function. This is, in fact, the method by which the previous example was formed, as will be seen below.
             
                \begin{definition}[Quantum Social Choice Extension (QSCE)]\label{def_QSCE}
                    A \textit{Quantum Social Choice Extension} is a function $h: D(\mathfrak{R}) \to D(\mathcal{A})$, which, when composed with a Quantum Social Welfare Function, creates a Quantum Social Choice Function.
                \end{definition}

                \begin{definition}[Natural Quantum Social Choice Extension (or the Natural Extension)]\label{def_NQSCE} 
                The most obvious of these extensions, which we will call the \textit{Natural Quantum Social Choice Extension} ($\Lambda$), would be a mapping of the support of each ranking in $\mathfrak{R}$ to equivalent support for its highest ranked alternative in $\mathcal{A}$. This is equivalently defined as the following
                    \begin{equation}\label{eq_NQSCE_dfn}
                        \Lambda(\rho) = \sum_{a \in A}Tr(\Pi^{a}\rho) \ket{a}
                    \end{equation}
                \end{definition}

            Using the previous example, the QSCF ($\xi$) could be written as a QSWF ($\mathcal{E}$, which in this case is a pure dictatorship of the first voter) and a QSCE ($\Lambda$):
            \begin{align*}
                \xi &= \Lambda \circ \mathcal{E}\\
                \rho &= \mathcal{E}(\rho_1 \otimes \rho_2) = \rho_1\\
                \Lambda(\rho_1) &= \sum_{a \in A}Tr(\Pi^{a}\rho_1) \ket{a} = \frac{1}{\sqrt{2}} \left(\ket{x} + \ket{y}\right)\\
                \xi(\rho_1 \otimes \rho_2) 
                    &= \Lambda(\mathcal{E}(\rho_1 \otimes \rho_2)) = \frac{1}{\sqrt{2}} \left(\ket{x} + \ket{y}\right)\\
            \end{align*}
            The use of QSCEs allow properties that have been proven about particular QSWFs to be quickly adapted to form QSCFs. These then can be shown to satisfy related properties with minimal additional effort, such as the property we define next. 
    
    \subsection{Quantum Incentive Compatibility}
        To work with GS in a quantum setting, we also need some notion of incentive compatibility. We came to the following formulation, in the spirit of the quantum formulations of other such properties, such as in \cite{bao2017quantumarrows, sun2021schrodinger}. The intuition for incentive compatibility is that a SWF does not rank alternatives in such a way that voters have an incentive to lie to try to manipulate the vote, which is called strategic manipulation. Thus, we define Quantum Incentive Compatibility as follows.

            \begin{definition}[Quantum Incentive Compatibility (QIC)]
                A QSWF or QSCF is \textit{Quantum Incentive Compatible} if it is not subject to Quantum Strategic Manipulation.
            \end{definition}

        This is, metaphorically speaking, just kicking the can down the road, as we require a definition of Quantum Strategic Manipulation. There are two parts to this, similar to cause and effect. The first is define what a preference is in the quantum setting, which is the ``cause". The second is what it means to manipulate a quantum social choice, which is the ``effect", modelled in the spirit of the other previously defined properties.
        
        \subsubsection{Quantum Preferences}

            To discuss what a voter's goal would be within voting, we need to define the notion of a preference. Classically, a preference is clear by the voter's ballot, as a single total ranking is usually given, which gives the ordinal ranking of utility for the voter of each of the alternatives. This does not strictly apply to the quantum setting due to reasons expressed in Appendix \ref{sec_intro_probvsquant} and later in \Cref{sec_StMa_discussion}, so we define a notion of quantum preference.
             
            \begin{definition}[Quantum Preference] \label{def_QPref}
                A voter $v_i$ has a \textit{Strong Positive Quantum Preference} for alternative $a$ over $a'$ if, with truthful ballot $\rho_i$, $\text{Tr}(\Pi^{a \succ a'}(\rho_i)) = 1$. That is, the voter's ballot $\rho_i$ only has support on the subspace for that preference ($S^{a \succ a'}$). 
                Similarly, that voter has a \textit{Strong Negative Quantum Preference} against alternative $a$ over $a'$ if $\text{Tr}(\Pi^{a \succ a'}(\rho_i)) = 0$. That is, $\rho_i$ has no support on the subspace for that preference ($S^{a \succ a'}$).
                Finally, that voter has a \textit{Weak Quantum Preference} if $\text{Tr}(\Pi^{a \succ a'}(\rho_i)) > 0$. That is, if $\rho_i$ has any support on the subspace for that preference ($S^{a \succ a'}$).
            \end{definition}

            In more easily understood language,
            \begin{itemize}
                \item A \textit{Strong Positive Quantum Preference} means a voter gives probability 1 to ranking $x$ over $y$ (e.g. $\frac{1}{\sqrt{2}} (\ket{x \succ y \succ z} + \ket{z \succ x \succ y})$).
                \item A \textit{Strong Negative Quantum Preference} means a voter gives probability 0 to ranking $x$ over $y$ (e.g. $\frac{1}{\sqrt{3}} (\ket{y \succ z \succ x} + \ket{z \succ y \succ x} + \ket{y \succ x \succ z})$).
                \item A \textit{Weak Quantum Preference} means a voter gives non-zero probability to ranking $x$ over $y$ \newline$\left(\text{e.g. } \sqrt{\frac{1}{3}}\ket{x \succ y \succ z} + \sqrt{\frac{2}{3}}\ket{y \succ z \succ x}\right)$.
            \end{itemize}

        \subsubsection{Quantum Strategic Manipulation}
            This leads us to the definition of what it means to manipulate the outcome, which is the ``effect". We also need to ensure that these definitions apply to all QSWFs and QSCFs in general, not just the particular functions we use in our result. With this in mind, we define strategic manipulation as follows.

                \begin{definition}[Quantum Strategic Manipulation (Welfare)]\label{def_QSMW}
                     Without loss of generality, let there exist a voter $v_i$ with truthful quantum ballot $\rho_i \in D(\mathfrak{R}_i)$, and correspondingly truthful ballots for all other voters. Given a QSWF $\mathcal{E}$, let $\rho_{soc} = \mathcal{E}(\rho_1 \otimes \dots \otimes \rho_i \otimes \dots \otimes \rho_n)$. Again, without loss of generality $\mathcal{E}$ is subject to \textit{Quantum Strategic Manipulation} with respect to alternatives $a$ and $a'$ if there exists an alternate quantum ballot $\rho_i' \in D(\mathfrak{R}_i)$ and $\rho_{soc}' = \mathcal{E}(\rho_1 \otimes \dots \otimes \rho_i' \otimes \dots \otimes \rho_n)$ such that $v_i$ has any of the following preferences and can manipulate the societal outcome in the corresponding way:
                     \begin{itemize}
                         \item Given a Strong Positive Quantum Preference for $a$ over $a'$ ($\text{Tr}(\Pi^{a \succ a'}(\rho_i)) = 1$), the voter can change society's ballot to only have support for this preference (that is, $\text{Tr}(\Pi^{a \succ a'}(\rho_{soc})) < 1$, but $\text{Tr}(\Pi^{a \succ a'}(\rho_{soc}')) = 1$).
                         \item Given a Strong Negative Quantum Preference against $a$ over $a'$ ($\text{Tr}(\Pi^{a \succ a'}(\rho_i)) = 0$), the voter can change society's ballot to have no support for this preference (that is, $\text{Tr}(\Pi^{a \succ a'}(\rho_{soc})) > 0$, but $\text{Tr}(\Pi^{a \succ a'}(\rho_{soc}')) = 0$).
                         \item Given a Weak Quantum Preference for $a$ over $a'$ ($\text{Tr}(\Pi^{a \succ a'}(\rho_i)) > 0$), the voter can change society's ballot to have support for this preference (that is, $\text{Tr}(\Pi^{a \succ a'}(\rho_{soc})) = 0$, but $\text{Tr}(\Pi^{a \succ a'}(\rho_{soc}')) > 0$).
                     \end{itemize}
                \end{definition}

            We see two more potential cases of strategic manipulation, though they are not included in our definition.
            Those are $\text{Tr}(\Pi^{a \succ a'}(\rho_i)) > \text{Tr}(\Pi^{a \succ a'}(\rho_i'))$ and $\text{Tr}(\Pi^{a \succ a'}(\rho_i')) > \text{Tr}(\Pi^{a \succ a'}(\rho_i))$ for negative and positive preferences respectively.
            These are not included because they do not seem to match the basic intuitions from the properties defined in other quantum works \cite{bao2017quantumarrows, sun2021schrodinger}. For example, QIIA is not defined based on the final probabilities being the same, but merely greater than zero or equal to 1. However, we wish to explore such notions for voting systems such as QCV in a future work.

            \paragraph{Other Notions of Strategic Manipulation.} \label{sec_StMa_discussion} Extending this line of thought, it might appear to be reasonable to use a notion of stochastic dominance (SD) (or even lex-truthfulness \cite{Chakrabarty2014welfare}) as is seen with probabilistic voting \cite{lederer2021non}. As noted in Appendix \ref{sec_intro_probvsquant}, these notions of strategy-proofness (SP-ness) do not make much sense in a quantum setting due to violation of the assumptions that these notions make. Specifically, both notions presume there to be strong or weak ordering preferences over outcomes. However, in a quantum setting, voters are allowed to maintain probability mixtures over alternatives as preferences, violating any classical sense of a strong or weak ordering. As mentioned in Appendix \ref{sec_intro_probvsquant}, PC-SP-ness, as discussed by Brandt \cite{brandt2017rolling}, is a notion of SP-ness which, due to its looser definition, might be more apt to describe quantum preferences. It simply states that one outcome is preferred if it is more likely to produce a more favorable outcome. However, this notion also fails to shed light on what ``favorable outcome" would look like when voter preferences are probabilistic. This essence is what we try to capture in our definition of quantum strategic manipulation.
            
            There is also an argument for attempting to match the voter's truthful ballot, but this reasoning allows for an unfortunate side effect. In the case of correlation between voters (such as quantum entanglement), this likely loses meaning as the final ballot is stand alone and does not have other ballots to be entangled with. Additionally, what it means to be ``closer" in a quantum context is not readily apparent.  This kind of manipulation is potentially very nuanced in the general probabilistic setting and may not to apply to the general quantum case. We do not include it in this work, though it may be interesting to study in the future.

            \paragraph{Positive Response.} Procaccia \cite{Procaccia_2010} implied a different perspective on the issue of strategic manipulation. He introduced the notion of \textit{positive response} as a possible desirable axiom that randomized rules may be required to satisfy. Positive response is informally defined as follows: ``\textit{for every agent there is some profile where the agent can increase an alternative's probability of being elected by pushing it upwards in its vote}'' \cite{Procaccia_2010}. If an agent can increase an alternative's probability of being elected in the societal vote, that implies that the agent is also \textit{decreasing} a different alternative's probability of being elected. Thus, it may become desirable for an agent to misrepresent their true preference probability mixture in favor of one that boosts a particular candidate or more closely aligns the final output with their own preference. We offer two examples. In the first, an agent $a_i$'s true preference is 80\% for alternative $x$ and 20\% for alternative $y$. If society outputs 100\% $x$, $a_i$ may want to report a higher percentage for alternative $y$ in order to more closely align society's preference with its own (receiving positive response for alternative $y$ by definition implies negative response for alternative $x$). In a second example, an agent $a_i$ has true preference \{$x$: 60\%, $y$: 30\%, $z$: 10\%\} and society's preference is \{$x$: 10\%, $y$: 50\%, $z$: 40\%\}. In this instance, it would be desirable for $a_i$ to try and elicit positive response on $x$ such that it boosts its highest preference. In doing so, this would inadvertently also bring either $y$ or $z$ or both closer to $a_i$'s true preference. Thus, the idea of \textit{positive response} implies the ability to do what is traditionally thought of as strategic manipulation. Not to be construed, by this discussion we mean to say that what is thought of as traditional strategic manipulation is significantly more nuanced in a probabilistic (or quantum) setting and is perhaps not the best understanding of strategic manipulation.

            \paragraph{Quantum Strategic Manipulation (Choice).} We must now discuss applying this definition to Quantum Social Choice Functions instead of Quantum Social Welfare Functions. The definitions are very similar to the previous cases.

                \begin{definition}[Quantum Strategic Manipulation (Choice)]\label{def_QSMC}
                     Without loss of generality, let there exist a voter $v_i$ with truthful quantum ballot $\rho_i \in D(\mathfrak{R}_i)$, and correspondingly truthful ballots for all other voters. Given a QSCF $\xi$, let $\alpha= \xi(\rho_1 \otimes \dots \otimes \rho_i \otimes \dots \otimes \rho_n)$. Again, without loss of generality $\xi$ is subject to \textit{Quantum Strategic Manipulation} with respect to alternative $a$ if there exists an alternative quantum ballot $\rho_i' \in D(\mathfrak{R}_i)$ and $\alpha' = \xi(\rho_1 \otimes \dots \otimes \rho_i' \otimes \dots \otimes \rho_n)$ such that $v_i$ has any of the following preferences and can manipulate the societal outcome in the corresponding way:
                     \begin{itemize}
                         \item Given a Strong Positive Quantum Preference for $a$ over all other candidates ($\text{Tr}(\Pi^{a}(\rho_i)) = 1$), the voter can change society's ballot to only have support for this alternative (that is, $\text{Tr}(\Pi^{a}(\alpha)) < 1$, but $\text{Tr}(\Pi^{a}(\alpha')) = 1$).
                         \item Given a Strong Negative Quantum Preference against $a$ ($\text{Tr}(\Pi^{a}(\rho_i)) = 0$, implying that $a$ cannot be the highest ranked candidate in any ranking $v_i$ gives support for in their ballot), the voter cannot remove all support for this alternative in society's ballot (that is, $\text{Tr}(\Pi^{a}(\alpha)) > 0$, but $\text{Tr}(\Pi^{a}(\alpha')) = 0$).
                         \item Given a Weak Quantum Preference for $a$ over all other candidates ($\text{Tr}(\Pi^{a}(\rho_i)) > 0$), the voter can change society's ballot to have support for this alternative (that is, $\text{Tr}(\Pi^{a}(\alpha)) = 0$, but $\text{Tr}(\Pi^{a}(\alpha')) > 0$).
                     \end{itemize}
                \end{definition}
             
            We note that our definition of Strong Negative Quantum Preference for welfare functions is intuitively stronger due to the enforcement of pairwise dominance, while for choice functions it only enforces an alternative to not be in the top position. We use this intuition to show that if a QSCF is built from the Natural QSCE and a QSWF that is not strategically manipulable, then it is also not strategically manipulable.
             
            \begin{theorem}[QIC QSWFs with the Natural Extension remain QIC] \label{thm_QIC_QSWFs_are_QIC}
                Any Quantum Social Choice Function $\xi$ formed by composing a Quantum Incentive Compatible Quantum Social Welfare Function $\mathcal{E}$ with the Natural Quantum Social Choice Extension $\Lambda$ is Quantum Incentive Compatible.
            \end{theorem}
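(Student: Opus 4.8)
The plan is to prove the contrapositive: if the QSCF $\xi = \Lambda \circ \mathcal{E}$ admits a Quantum Strategic Manipulation in the Choice sense (\Cref{def_QSMC}), then $\mathcal{E}$ admits a Quantum Strategic Manipulation in the Welfare sense (\Cref{def_QSMW}), contradicting the hypothesis that $\mathcal{E}$ is QIC. Everything rests on two elementary facts about $\Lambda$. The \emph{bridging identity}: from \eqref{eq_NQSCE_dfn}, for every $\rho \in D(\mathfrak{R})$ and every $a \in A$ we have $\text{Tr}(\Pi^{a}(\Lambda(\rho))) = \text{Tr}(\Pi^{a}(\rho))$, where the left-hand $\Pi^{a}$ is the projector onto $\ket{a} \in \mathcal{A}$ and the right-hand $\Pi^{a}$ is the projector onto $\mathcal{S}^{a} = \bigcap_{a' \neq a} \mathcal{S}^{a \succ a'} \subseteq \mathfrak{R}$ from \Cref{def_sub_winner}. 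The \emph{winner-subspace monotonicity}: since $\mathcal{S}^{a} \subseteq \mathcal{S}^{a \succ a'}$ for every $a' \neq a$ and all of these are coordinate subspaces of $\mathfrak{R}$, the projectors commute, $\Pi^{a} = \prod_{a' \neq a} \Pi^{a \succ a'}$, and $\Pi^{a} \preceq \Pi^{a \succ a'}$, so $\text{Tr}(\Pi^{a}\sigma) \le \text{Tr}(\Pi^{a \succ a'}\sigma)$ for every state $\sigma$; moreover $\text{Tr}(\Pi^{a \succ a'}\sigma) = 1$ for all $a' \neq a$ forces $\text{Tr}(\Pi^{a}\sigma) = 1$ (support on each half-space implies support on the intersection).

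I would then split into the three manipulation modes of \Cref{def_QSMC}, writing $\rho_{soc} = \mathcal{E}(\rho_1 \otimes \cdots \otimes \rho_n)$, $\alpha = \Lambda(\rho_{soc})$, and $\rho_{soc}', \alpha'$ for the versions obtained after voter $v_i$ switches to a strategic ballot $\rho_i'$. The \emph{Strong Positive} mode is the clean one. From $\text{Tr}(\Pi^{a}(\rho_i)) = 1$, monotonicity gives $\text{Tr}(\Pi^{a \succ a'}(\rho_i)) = 1$ for every $a' \neq a$, so $v_i$ holds a Strong Positive welfare preference for $a$ over each $a'$. From $\text{Tr}(\Pi^{a}(\alpha)) < 1$ and the bridging identity, $\text{Tr}(\Pi^{a}(\rho_{soc})) < 1$, so by the ``all-ones'' rule there is a witness $a^{\ast} \neq a$ with $\text{Tr}(\Pi^{a \succ a^{\ast}}(\rho_{soc})) < 1$; and from $\text{Tr}(\Pi^{a}(\alpha')) = 1$ we get $\text{Tr}(\Pi^{a \succ a^{\ast}}(\rho_{soc}')) = 1$. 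Hence $(a, a^{\ast})$ witnesses a Strong Positive welfare manipulation of $\mathcal{E}$ — the sought contradiction.

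The \emph{Weak} and \emph{Strong Negative} modes are where the real work lies, and where I expect the obstacle to be concentrated. In the favorable direction monotonicity still transfers cleanly: in the Weak mode $\text{Tr}(\Pi^{a}(\rho_i)) > 0$ gives a Weak welfare preference of $v_i$ for $a$ over every $a'$, and $\text{Tr}(\Pi^{a}(\alpha')) > 0$ gives $\text{Tr}(\Pi^{a \succ a'}(\rho_{soc}')) > 0$ for every $a'$; the Strong Negative mode is symmetric ($\text{Tr}(\Pi^{a}(\rho_{soc})) > 0$ transfers to all pairwise traces). The difficulty is the ``zero'' direction: $\text{Tr}(\Pi^{a}(\rho_{soc})) = 0$ does \emph{not} by itself force $\text{Tr}(\Pi^{a \succ a'}(\rho_{soc})) = 0$ for a single $a'$, since a societal ballot can place $a$ above $b$ in some rankings and above $c$ in others while never placing $a$ on top. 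This is exactly the asymmetry noted after \Cref{def_QSMC}: the welfare Strong Negative preference enforces pairwise dominance, whereas its choice analogue only forbids the top slot. The plan is: if the truthful side already exhibits a pair $a'$ with $\text{Tr}(\Pi^{a \succ a'}(\rho_{soc})) = 0$ (and, in the Strong Negative case, additionally $\text{Tr}(\Pi^{a \succ a'}(\rho_i)) = 0$), combine with the transferred positivity above to read off the required welfare manipulation on $(a,a')$; otherwise, analyze the finitely many rankings with $a$ on top that $\rho_{soc}'$ newly charges (respectively that $\rho_{soc}$ loses) and argue that at least one pairwise relation must have crossed the same $0$/$1$ boundary, so the induced manipulation of $\mathcal{E}$ is genuine. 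Making this last step airtight — ruling out the residual configuration in which the winner-subspace mass moves across the boundary while every pairwise mass stays strictly interior — is the crux, and is where I would expect to need either linearity of $\mathcal{E}$, a restriction to ballots supported on strict linear orders, or a further structural property of $\Lambda$; once that is settled, assembling the three cases into the contrapositive is routine.
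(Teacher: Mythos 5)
Your Strong Positive case is complete and is essentially the paper's argument: push $\text{Tr}(\Pi^{a}(\rho_i)) = 1$ down to pairwise Strong Positive welfare preferences, invoke QIC of $\mathcal{E}$ pair by pair, and pull the conclusion back through $\Lambda$ using the fact that full support on every $S^{a \succ a'}$ is equivalent to full support on the intersection $S^{a}$ (\Cref{def_sub_winner}). That is the one direction of your ``winner-subspace monotonicity'' that is unconditionally valid, and it is all that case needs; your explicit bridging identity $\text{Tr}(\Pi^{a}(\Lambda(\rho))) = \text{Tr}(\Pi^{a}(\rho))$ is a point the paper leaves implicit.

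The gap is that you leave the Weak and Strong Negative cases unfinished, so the proposal as written does not prove the theorem. That said, the obstruction you isolate is exactly where the burden lies, and you should know that the paper's own proof closes these two cases by simply asserting the implications you doubt: in the Weak case it claims that $\forall a' \neq a,\ \text{Tr}(\Pi^{a \succ a'}(\rho_{soc})) > 0$ implies $\text{Tr}(\Pi^{a}(\alpha)) > 0$, and in the Strong Negative case it claims that $a$ never being top-ranked in the support of $\rho_{soc}'$ yields a \emph{single} $a'$ with $\text{Tr}(\Pi^{a \prec a'}(\rho_{soc}')) = 1$. Both assertions fail for a general state: with $A = \{a,b,c\}$ and $\sigma = \frac{1}{\sqrt{2}}\left(\ket{b \succ a \succ c} + \ket{c \succ a \succ b}\right)$ one has $\text{Tr}(\Pi^{a \succ b}\sigma) = \text{Tr}(\Pi^{a \succ c}\sigma) = \tfrac{1}{2} > 0$ yet $\text{Tr}(\Pi^{a}\sigma) = 0$, and no single alternative dominates $a$ with probability one. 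So positivity on each pairwise half-space does not pass to their intersection, and ``never on top'' does not produce a uniform dominating witness. Your instinct that closing these cases needs an extra structural input — linearity of $\mathcal{E}$, restricted ballot supports, or properties specific to QCV such as the minority and unanimity steps — is correct; the pairwise guarantee of \Cref{def_QSMW} alone does not suffice, and the paper's route does not actually supply the missing step either. If you want a sound statement, either strengthen the hypothesis on $\mathcal{E}$ or weaken the Weak/Strong Negative clauses of \Cref{def_QSMC} to quantify over pairs rather than over the winner subspace.
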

             
            \begin{proof}
                We cover all three cases below:
                \begin{itemize}
                    \item (\textit{Strong Positive Quantum Preferences})
                    
                    Assume, for the sake of contradiction, that voter $v_i$ can manipulate $\xi$ with respect to Strong Positive Quantum Preferences for some alternative $a$ over all other $a'$ in their ballot $\rho_i$, and societal preference is formed as $\rho_{soc} = \mathcal{E}(\rho_0 \otimes \dots \otimes \rho_i \otimes \dots \otimes \rho_n)$ (QSWF) and then transformed into $\alpha = \Lambda(\rho_{soc})$ (QSCE), but that $\mathcal{E}$ is QIC. 
                    
                    To be able to strategically manipulate $\xi$, the voter must be able to create a $\rho_i'$, and correspondingly a societal preference $\rho_{soc}' = \mathcal{E}(\rho_0 \otimes \dots \otimes \rho_i' \otimes \dots \otimes \rho_n)$ that is transformed to $\alpha' = \Lambda(\rho_{soc}')$ such that $\text{Tr}(\Pi^{a}(\alpha)) < 1$ (by assumption), but $\text{Tr}(\Pi^{a}(\alpha')) = 1$. The second of those implies, by \Cref{eq_NQSCE_dfn} in \Cref{def_NQSCE}, $a$ is the top choice in all rankings with support in $\rho_{soc}'$, so $\forall a' \neq a, \text{Tr}(\Pi^{a \succ a'}(\rho_{soc}')) = 1$.
                    
                    By assumption, $\mathcal{E}$ is QIC and thus cannot be manipulated with respect to any Strong Positive Quantum Preference $a \succ a'$. Therefore $\forall a' \neq a, \text{Tr}(\Pi^{a \succ a'}(\rho_{soc}')) = 1$ implies $\forall a' \neq a, \text{Tr}(\Pi^{a \succ a'}(\rho_{soc})) = 1$ and, by \Cref{eq_NQSCE_dfn}, this implies $\text{Tr}(\Pi^{a}(\alpha)) = 1$. This contradicts $\text{Tr}(\Pi^{a}(\alpha)) < 1$. Therefore, $\xi$ cannot be strategically manipulated with respect to a Strong Positive Quantum Preference.

                    \item (\textit{Strong Negative Quantum Preference}) 

                    Assume, for the sake of contradiction, that voter $v_i$ can manipulate $\xi$ with respect to a Strong Negative Quantum Preference against some alternative $a$ (and therefore is never the top ranked candidate in any ranking they have support for) in their ballot $\rho_i$, and societal preference is formed as $\rho_{soc} = \mathcal{E}(\rho_0 \otimes \dots \otimes \rho_i \otimes \dots \otimes \rho_n)$ (QSWF) and then transformed into $\alpha = \Lambda(\rho_{soc})$ (QSCE), but that $\mathcal{E}$ is QIC. 
                    
                    To be able to strategically manipulate $\xi$, the voter must be able to create a $\rho_i'$, and correspondingly a societal preference $\rho_{soc}' = \mathcal{E}(\rho_0 \otimes \dots \otimes \rho_i' \otimes \dots \otimes \rho_n)$ that is transformed to $\alpha' = \Lambda(\rho_{soc}')$, such that $\text{Tr}(\Pi^{a}(\alpha)) > 0$ (by assumption), but $\text{Tr}(\Pi^{a}(\alpha')) = 0$. The second of those implies, by \Cref{eq_NQSCE_dfn} in \Cref{def_NQSCE}, that $a$ is never the top choice in any of the rankings with support in $\rho_{soc}'$, so $\exists a' \neq a ~s.t.~ \text{Tr}(\Pi^{a \prec a'}(\rho_{soc}')) = 1$.
                    
                    By assumption, $\mathcal{E}$ is QIC and thus cannot be manipulated with respect to any Strong Positive Quantum Preference $a \prec a'$. Therefore  $\text{Tr}(\Pi^{a \prec a'}(\rho_{soc}')) = 1$ implies $\text{Tr}(\Pi^{a \prec a'}(\rho_{soc})) = 1$ and $\text{Tr}(\Pi^{a \succ a'}(\rho_{soc})) = 0$. By \Cref{eq_NQSCE_dfn}, this implies $\text{Tr}(\Pi^{a}(\alpha)) = 0$. This contradicts $\text{Tr}(\Pi^{a}(\alpha)) > 0$. Therefore, $\xi$ cannot be strategically manipulated with respect to a Strong Negative Quantum Preference.
                     
                    \item (\textit{Weak Quantum Preference})

                    Assume, for the sake of contradiction, that voter $v_i$ can manipulate $\xi$ with respect to Weak Quantum Preferences for some alternative $a$ over all other $a'$ in their ballot $\rho_i$, and societal preference is formed as $\rho_{soc} = \mathcal{E}(\rho_0 \otimes \dots \otimes \rho_i \otimes \dots \otimes \rho_n)$ (QSWF) and then transformed into $\alpha = \Lambda(\rho_{soc})$ (QSCE), but that $\mathcal{E}$ is QIC. 
                    
                    To be able to strategically manipulate $\xi$, the voter must be able to create a $\rho_i'$, and correspondingly a societal preference $\rho_{soc}' = \mathcal{E}(\rho_0 \otimes \dots \otimes \rho_i' \otimes \dots \otimes \rho_n)$ that is transformed to $\alpha' = \Lambda(\rho_{soc}')$, such that $\text{Tr}(\Pi^{a}(\alpha)) = 0$ (by assumption), but $\text{Tr}(\Pi^{a}(\alpha')) > 0$. The second of those implies, by \Cref{eq_NQSCE_dfn} in \Cref{def_NQSCE}, that $a$ is the top choice in some ranking with support in $\rho_{soc}'$. That ranking's support indicates support for $a$ being preferred over every other candidate, so $\forall a' \neq a, \text{Tr}(\Pi^{a \succ a'}(\rho_{soc}')) > 0$.
                    
                    By assumption, $\mathcal{E}$ is QIC and thus cannot be manipulated with respect to any Weak Quantum Preference $a \succ a'$. Therefore $\forall a' \neq a, \text{Tr}(\Pi^{a \succ a'}(\rho_{soc}')) >0$ implies $\forall a' \neq a, \text{Tr}(\Pi^{a \succ a'}(\rho_{soc})) > 0$. By \Cref{eq_NQSCE_dfn}, this implies $\text{Tr}(\Pi^{a}(\alpha)) > 0$. This contradicts $\text{Tr}(\Pi^{a}(\alpha)) = 0$. Therefore, $\xi$ cannot be strategically manipulated with respect to a Weak Quantum Preference.
                \end{itemize}
            \end{proof}

            This implies that a QSCF being formed from a QIC QSWF composed with the NQSCE is an inherently stronger condition than a QSCF  being QIC (regardless of how it was formed), which matches the intuition mentioned before. With this in mind, it suffices to show that QCV is QIC to prove that one of the desired properties for bypassing QGS holds, leaving the properties of onto and non-dictatorship.

            \begin{theorem}[QCV is QIC] \label{thm_qcv_qic}
                Quantum Condorcet Voting is Quantum Incentive Compatible.
            \end{theorem}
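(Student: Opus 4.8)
The plan is to show directly that $\mathcal{E}_{qcv}$ is not subject to Quantum Strategic Manipulation (Welfare) in the sense of \Cref{def_QSMW}, treating the three preference types but reducing two of them to one core fact about QCV's output. Since $\mathcal{E}_{qcv}$ is defined on basis ballot profiles and extended to arbitrary profiles by the standard convex extension (measure in the ranking basis, apply QCV to the outcome, average, as in \cite{sun2021schrodinger}), it suffices to understand $\mathrm{Tr}(\Pi^{x \succ y}\sigma^3(\vec{R}))$ for a basis profile $\vec{R}$ with final state $\sigma^3(\vec{R})$, and then average over the induced distribution on basis profiles. From the construction of QCV and the proof of \Cref{thm_violate_arrows} I will extract two facts: (F1) if at least one voter ranks $x \succ y$ in $\vec{R}$ then $\mathrm{Tr}(\Pi^{x \succ y}\sigma^3(\vec{R})) > 0$, bounded below uniformly (this is what the \emph{give the minority a shot} step guarantees), and (F2) if every voter ranks $x \succ y$ in $\vec{R}$ then $\mathrm{Tr}(\Pi^{x \succ y}\sigma^3(\vec{R})) = 1$ (sharp unanimity, from \emph{enforce unanimity}). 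Because the profiles in \Cref{def_QSMW} are product states, the induced distribution on basis profiles is the product of the voters' ranking-basis marginals, so averaging promotes these to: (C1) $\mathrm{Tr}(\Pi^{x \succ y}\rho_i) > 0$ for some $i$ implies $\mathrm{Tr}(\Pi^{x \succ y}\rho_{soc}) > 0$; (C2) $\mathrm{Tr}(\Pi^{x \succ y}\rho_j) = 1$ for all $j$ implies $\mathrm{Tr}(\Pi^{x \succ y}\rho_{soc}) = 1$; and, using $\Pi^{x \succ y} + \Pi^{y \succ x} = I$ on $\mathfrak{R}$ together with (C1) applied to the reversed pair, (C3) $\mathrm{Tr}(\Pi^{x \succ y}\rho_{soc}) = 1$ implies $\mathrm{Tr}(\Pi^{x \succ y}\rho_j) = 1$ for every voter $j$.

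Given these, the three cases of \Cref{def_QSMW} are short. For a \emph{Weak Quantum Preference} of $v_i$ for $a$ over $a'$ we have $\mathrm{Tr}(\Pi^{a \succ a'}\rho_i) > 0$, so (C1) already forces $\mathrm{Tr}(\Pi^{a \succ a'}\rho_{soc}) > 0$; the first clause of the manipulation condition ($\mathrm{Tr}(\Pi^{a \succ a'}\rho_{soc}) = 0$) can never hold, so no deviation $\rho_i'$ can manipulate. For a \emph{Strong Positive Quantum Preference} of $v_i$ for $a$ over $a'$, $\mathrm{Tr}(\Pi^{a \succ a'}\rho_i) = 1$; if some $\rho_i'$ achieved $\mathrm{Tr}(\Pi^{a \succ a'}\rho_{soc}') = 1$, then (C3) forces $\mathrm{Tr}(\Pi^{a \succ a'}\rho_j) = 1$ for all $j \neq i$, so the whole truthful profile is sharply unanimous for $a \succ a'$ and (C2) gives $\mathrm{Tr}(\Pi^{a \succ a'}\rho_{soc}) = 1$, contradicting the required $\mathrm{Tr}(\Pi^{a \succ a'}\rho_{soc}) < 1$. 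The \emph{Strong Negative} case is the previous argument applied to the reversed pair $(a', a)$: $\mathrm{Tr}(\Pi^{a \succ a'}\rho_i) = 0$ is a Strong Positive preference for $a'$ over $a$, and the manipulation clause $\mathrm{Tr}(\Pi^{a \succ a'}\rho_{soc}') = 0$ reads $\mathrm{Tr}(\Pi^{a' \succ a}\rho_{soc}') = 1$. Hence $\mathcal{E}_{qcv}$ admits no Quantum Strategic Manipulation and is QIC.

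The hard part is justifying (F1) rigorously — that the weight $\delta\,\Omega^{x \succ y}$ injected by step 5 survives the cascade of projections in step 6, and that a uniform fraction of it survives the renormalization there. The observation that makes this routine is that every $\Pi^{x \succ y}$ is diagonal in the ranking basis (a basis ranking lies wholly inside or wholly outside $S^{x \succ y}$), so the projections in step 6 pairwise commute and step 6 merely restricts $\sigma^2$ to the coordinate subspace $V$ spanned by the rankings consistent with all pairs that are unanimously agreed in $\vec{R}$ (write $\Pi_V$ for the projection onto $V$). The set of unanimously-agreed pairs is transitively closed — each voter's ranking contains it, hence contains its transitive closure — so a pair $(x,y)$ encoded by some voter but not unanimously reversed is witnessed by that voter's own ranking, which lies in $S^{x \succ y} \cap V$; hence $\Pi_V \Omega^{x \succ y} \Pi_V \neq 0$ and at least a $1/|\mathcal{L}(A)|$-fraction of the injected $\delta$ weight remains, which gives (F1) with a uniform positive lower bound (exactly what lets the averaging step produce (C1)). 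Spelling this out, together with the averaging bookkeeping that lifts (F1)--(F2) to (C1)--(C3) on product profiles, is the only genuinely technical content; the combinatorics above is the rest. It is also worth noting that (F1) and (F2) are essentially re-uses of facts already established in \cite{sun2021schrodinger} en route to \Cref{thm_violate_arrows}, so no substantially new machinery is needed.
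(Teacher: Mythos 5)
Your proposal is correct and takes essentially the same route as the paper's proof: all three manipulation cases are ruled out by the same two facts, that \emph{give the minority a shot} forces society to retain support for any pair encoded by any (unchanged) voter and that \emph{enforce unanimity} forces full support exactly when all voters agree. Your write-up is in fact more careful than the paper's --- it makes explicit the convex extension to non-basis product profiles, the commutation of the step-6 projections, and the survival of the injected $\delta$ weight, all of which the paper's proof passes over informally --- but the underlying argument is the same.
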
    
         
            \begin{proof}
                We will address each part of QIC individually.  If QCV respects each part of QIC, then it is QIC.
                \begin{itemize}
                    \item (\textit{Strong Positive Quantum Preference}) Given that voter $v_i$ has a Strong Positive Quantum Preference for $a$ over $a'$ ($\text{Tr}(\Pi^{a \succ a'}(\rho_i)) = 1$), they cannot increase society's support for $a \succ a'$ to 1 if it wasn't there already. That is, for some honest societal ballot $\rho_{soc}$ and some societal ballot $\rho_{soc}'$ where voter $v_i$ lied to try to boost the support for $a \succ a'$, $\text{Tr}(\Pi^{a \succ a'}(\rho_{soc})) < 1$ implies that $\text{Tr}(\Pi^{a \succ a'}(\rho_{soc}')) < 1$ \textit{when we use QCV}.
    
                    If $\text{Tr}(\Pi^{a \succ a'}(\rho_{soc})) < 1$, by the enforcing unanimity step in QCV (step 6), some other voter $v_j$ must have support on a subspace $S^{a \preceq a'}$. Since $v_j$ did not change their vote in $\rho_{soc}'$, by giving the minority a shot, society must have support on a subspace $S^{a \preceq a'}$, and therefore $\text{Tr}(\Pi^{a \preceq a'}(\rho_{soc}')) > 0$. Thus, because $\text{Tr}(\Pi^{a \preceq a'}(\rho_{soc}'))  + \text{Tr}(\Pi^{a \succ a'}(\rho_{soc}')) = 1$, $\text{Tr}(\Pi^{a \succ a'}(\rho_{soc}')) < 1$.
    
                    \item (\textit{Strong Negative Quantum Preference}) Given that voter $v_i$ has a Strong Negative Quantum Preference against $a$ over $a'$ ($\text{Tr}(\Pi^{a \succ a'}(\rho_i)) = 0$), they cannot decrease society's support for $a \succ a'$ to 0 if it wasn't there already. That is, for some honest societal ballot $\rho_{soc}$ and some societal ballot $\rho_{soc}'$ where voter $v_i$ lied to try to decrease the support for $a \succ a'$, $\text{Tr}(\Pi^{a \succ a'}(\rho_{soc})) > 0$ implies that $\text{Tr}(\Pi^{a \succ a'}(\rho_{soc}')) > 0$.
    
                    If $\text{Tr}(\Pi^{a \succ a'}(\rho_i)) = 0$ while $\text{Tr}(\Pi^{a \succ a'}(\rho_{soc})) > 0$, then by enforcing unanimity in QCV (step 6), some other voter $v_j$ must have support on the subspace $S^{a \succ a'}$. Since $v_j$ did not change their vote in $\rho_{soc}'$, by giving the minority a shot, society must still have support on the subspace $S^{a \succ a'}$, and thus $\text{Tr}(\Pi^{a \succ a'}(\rho_{soc}')) > 0$.
    
                    \item (\textit{Weak Quantum Preference}) Given that voter $v_i$ has a Weak Quantum Preference for $a$ over $a'$ ($\text{Tr}(\Pi^{a \succ a'}(\rho_i)) > 0$), they cannot change society's support for $a \succ a'$ to greater than 0 if it wasn't there already. That is, for some honest societal ballot $\rho_{soc}$ and and some societal ballot $\rho_{soc}'$ where voter $v_i$ lied to try to boost the support for $a \succ a'$, $\text{Tr}(\Pi^{a \succ a'}(\rho_{soc})) = 0$ implies that $\text{Tr}(\Pi^{a \succ a'}(\rho_{soc}')) = 0$.
    
                    This case is trivial. If a voter $v_i$ had support on $a \succ a'$, then society \textit{must} have support on $a \succ a'$ by giving the minority a shot. Thus, the presumptions are false and this case never applies.
                \end{itemize}
            \end{proof}
         
            \begin{corollary}[QCV with the Natural Extension is QIC] \label{cor_QCVNE_is_QIC}
                The QSCF formed from QCV with the Natural Extension (QCVNE) is QIC.
            \end{corollary}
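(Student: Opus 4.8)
The plan is to obtain this corollary as an immediate consequence of the two preceding results. By definition, QCV with the Natural Extension (QCVNE) is the Quantum Social Choice Function $\xi = \Lambda \circ \mathcal{E}_{qcv}$ obtained by composing the QCV Quantum Social Welfare Function $\mathcal{E}_{qcv}$ with the Natural Quantum Social Choice Extension $\Lambda$ of \Cref{def_NQSCE}. So I would first explicitly record that QCVNE has exactly the form required by the hypothesis of \Cref{thm_QIC_QSWFs_are_QIC}, namely a QSCF built by composing some QSWF with $\Lambda$.

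Next I would invoke \Cref{thm_qcv_qic}, which states that $\mathcal{E}_{qcv}$ is Quantum Incentive Compatible as a QSWF. This verifies the remaining hypothesis of \Cref{thm_QIC_QSWFs_are_QIC}: the QSWF being composed with $\Lambda$ is QIC. Applying \Cref{thm_QIC_QSWFs_are_QIC} to $\mathcal{E} = \mathcal{E}_{qcv}$ then yields directly that $\xi = \Lambda \circ \mathcal{E}_{qcv} = \text{QCVNE}$ is Quantum Incentive Compatible, which is the claim.

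I do not expect any genuine obstacle here, since all the work has already been done in \Cref{thm_QIC_QSWFs_are_QIC} and \Cref{thm_qcv_qic}; the corollary is purely a matter of matching definitions. The only thing worth double-checking is that the composition order and types line up — that $\mathcal{E}_{qcv}: D(\mathfrak{R}_1 \otimes \dots \otimes \mathfrak{R}_n) \to D(\mathfrak{R})$ feeds correctly into $\Lambda: D(\mathfrak{R}) \to D(\mathcal{A})$ so that QCVNE is indeed a well-formed QSCF $D(\mathfrak{R}_1 \otimes \dots \otimes \mathfrak{R}_n) \to D(\mathcal{A})$ in the sense of \Cref{def_QSCF} — after which the two cited theorems close the argument with no further computation.
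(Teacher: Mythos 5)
Your proposal is correct and matches the paper's own proof, which simply cites \Cref{thm_qcv_qic} and \Cref{thm_QIC_QSWFs_are_QIC}; your additional remarks about the composition $\Lambda \circ \mathcal{E}_{qcv}$ type-checking as a QSCF are a harmless elaboration of the same argument.
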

         
            \begin{proof}
                Follows from Theorems \ref{thm_qcv_qic} and \ref{thm_QIC_QSWFs_are_QIC}.
            \end{proof}

    \subsection{Dictatorship for Quantum Social Choice Functions}

        Next, we will adapt the intuition of quantum dictatorship as defined in previous works \cite{bao2017quantumarrows, sun2021schrodinger} to the case of Quantum Social Choice Functions. The intuition is straight-forward in that it replaces full rankings with highest ranked candidates.

            \begin{definition}[Quantum Dictatorship for QSCFs]\label{def_dict_qscf}~\newline
                A QSCF $\xi$ satisfies \textit{Sharp Dictatorship} if there is a voter $v_i$ with ballot $\rho_i$ with $\alpha = \xi(\rho_1 \otimes \dots \otimes \rho_i \otimes \dots \otimes \rho_n)$, 
                \begin{itemize}
                    \item $\forall a, (\text{Tr}(\Pi^{a}(\alpha)) = 1 \Leftrightarrow \text{Tr}(\Pi^{a}(\rho_i)) = 1)$.
                \end{itemize}
                A QSCF $\xi$ satisfies \textit{Unsharp Dictatorship} if there is a voter $v_i$ with ballot $\rho_i$ with $\alpha = \xi(\rho_1 \otimes \dots \otimes \rho_i \otimes \dots \otimes \rho_n)$, 
                \begin{itemize}
                    \item $\forall a, (\text{Tr}(\Pi^{a}(\alpha)) >0 \Leftrightarrow \text{Tr}(\Pi^{a}(\rho_i)) >0)$.
                \end{itemize}
                A QSCF is a \textit{Quantum Dictatorship} if it is either a Sharp or Unsharp Dictatorship. 
            \end{definition}

        As with incentive compatibility, there appears to be a connection between dictatorial properties of QSWF and QSCF. Such a connection for the Sharp Dictatorship case is proved below. If such a connection exists for weak dictatorships as well is an open question. As we show later in the proof of \Cref{thm_violate_QGS} (\Cref{thm_violate_QGS_proof}), both cases hold for QCV, but they may or may not hold in the general case.

        \begin{lemma}[Non-Sharp Dictatorial QSWFs with the Natural Extension are non-Sharp-Dictatorial]\label{lem_non_sharp_dict_QSWF_still_non_sharp_dict} 
            Any Quantum Social Choice Function $\xi$ formed by composing a Quantum Social Welfare Function $\mathcal{E}$ that is not a Sharp Dictatorship with the Natural Extension $\Lambda$ is not a Sharp Dictatorship.
        \end{lemma}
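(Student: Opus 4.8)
The plan is to prove the contrapositive: assume the composed QSCF $\xi = \Lambda \circ \mathcal{E}$ is a Sharp Dictatorship, with dictator $v_i$, and try to force $\mathcal{E}$ to be a Sharp Dictatorship as well. The one structural fact I would extract about $\Lambda$ is that it preserves top-alternative statistics: because every linear order has a unique maximal element, the winner-subspaces $\{\mathcal{S}^a\}_{a \in A}$ partition $\mathfrak{R}$, so $\sum_{a}\Pi^a = I_{\mathfrak{R}}$ and $\text{Tr}(\Pi^a \Lambda(\sigma)) = \text{Tr}(\Pi^a \sigma)$ for every $\sigma \in D(\mathfrak{R})$ (on the left $\Pi^a = \ket{a}\bra{a}$ acts on $\mathcal{A}$). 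Writing $\rho_{soc} = \mathcal{E}(\rho)$ and $\alpha = \Lambda(\rho_{soc})$, the Sharp-Dictatorship hypothesis on $\xi$ then reads exactly: for every ballot profile $\rho$ and every $a \in A$, $\text{Tr}(\Pi^a \rho_{soc}) = 1 \iff \text{Tr}(\Pi^a \rho_i) = 1$.

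Next I would combine this with the elementary observation that, for any $\sigma \in D(\mathfrak{R})$, $\text{Tr}(\Pi^a \sigma) = 1$ precisely when $\text{Tr}(\Pi^{a \succ a'}\sigma) = 1$ for every $a' \neq a$ (both statements say $\sigma$ is supported on $\bigcap_{a' \neq a}\mathcal{S}^{a \succ a'}$). This yields, for every $\rho$ and $a$, the equivalence $\big(\forall a' \neq a:\ \text{Tr}(\Pi^{a \succ a'}\rho_{soc}) = 1\big) \iff \big(\forall a' \neq a:\ \text{Tr}(\Pi^{a \succ a'}\rho_i) = 1\big)$. The remaining step — and the one I expect to be the real obstacle — is to upgrade this "all comparisons emanating from a common top element" equivalence to the genuinely pairwise statement demanded by the QSWF notion of Sharp Dictatorship, namely $\text{Tr}(\Pi^{a \succ a'}\rho_{soc}) = 1 \iff \text{Tr}(\Pi^{a \succ a'}\rho_i) = 1$ for \emph{every} ordered pair $(a,a')$, including pairs where neither alternative sits at the top of the relevant ballots.

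I do not expect that upgrade to go through for arbitrary $\mathcal{E}$, because $\Lambda$ discards all information about $\rho_{soc}$ below its top position, so "Sharp Dictatorship of $\mathcal{E}$" is strictly finer than "Sharp Dictatorship of $\Lambda \circ \mathcal{E}$" once $|A| > 2$: fixing representative linear orders $\sigma_a$ (each with $a$ on top) and taking $\mathcal{E}(\rho) = \sum_{a}\text{Tr}(\Pi^a \rho_i)\,\ket{\sigma_a}\bra{\sigma_a}$ gives a legitimate QSWF that is \emph{not} a Sharp Dictatorship (it depends only on $\rho_i$ and mangles $v_i$'s non-top comparisons — on the basis ballot $\rho_i = \ket{a_1 \succ a_3 \succ a_2}$ it outputs $a_2 \succ a_3$ while $v_i$ wants $a_3 \succ a_2$ — so no voter is a Sharp Dictator for it) yet satisfies $\text{Tr}(\Pi^a \xi(\rho)) = \text{Tr}(\Pi^a \rho_i)$ and therefore makes $\Lambda \circ \mathcal{E}$ a Sharp Dictatorship with dictator $v_i$. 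My realistic plan is thus either to add a hypothesis that closes this gap (e.g. that $\mathcal{E}$ is surjective onto $D(\mathfrak{R})$, or more weakly that the dictator's pairwise data is recoverable from its top-alternative data), or to weaken the conclusion to "no voter is a top-alternative Sharp Dictator for $\mathcal{E}$"; and for the downstream needs of \Cref{thm_violate_QGS} I would, if the general lemma cannot be salvaged, establish non-Sharp-Dictatorship of QCVNE directly from the facts that QCV is already not a Sharp Dictatorship (\Cref{thm_violate_arrows}) and that its "give the minority a shot" step plants positive weight on every $\mathcal{S}^a$, which on its own precludes Sharp Dictatorship of the extension.
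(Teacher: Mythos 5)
Your proposal tracks the paper's own argument exactly up to its key chain of equivalences: the paper likewise establishes, for every profile $\rho$ and every alternative $a$, that $\bigl(\forall a'\neq a,\ \text{Tr}(\Pi^{a\succ a'}(\rho_i))=1\bigr) \Leftrightarrow \text{Tr}(\Pi^{a}(\rho_i))=1 \Leftrightarrow \text{Tr}(\Pi^{a}(\alpha))=1 \Leftrightarrow \bigl(\forall a'\neq a,\ \text{Tr}(\Pi^{a\succ a'}(\rho_{soc}))=1\bigr)$, using the same two observations you make about $\mathcal{S}^{a}=\bigcap_{a'\neq a}\mathcal{S}^{a\succ a'}$ and about $\Lambda$ preserving top-alternative weights. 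The divergence is in what happens next: the paper asserts that this chain ``matches the definition of Sharp Dictatorship for QSWFs'' (\Cref{def_quantum_dict}) and declares a contradiction, whereas you correctly observe that the chain only equates the two universally quantified top-position statements and does not yield the pairwise biconditional $\text{Tr}(\Pi^{a\succ a'}(\rho_i))=1 \Leftrightarrow \text{Tr}(\Pi^{a\succ a'}(\rho_{soc}))=1$ for arbitrary ordered pairs, which is what \Cref{def_quantum_dict} actually requires. The obstacle you flag is therefore not a gap in your proposal; it is precisely the step at which the paper's own proof is incomplete. Moreover, since the paper defines a QSWF as an arbitrary function $D(\mathfrak{R}_1\otimes\dots\otimes\mathfrak{R}_n)\to D(\mathfrak{R})$, your candidate counterexample $\mathcal{E}(\rho)=\sum_{a}\text{Tr}(\Pi^{a}\rho_i)\ket{\sigma_a}\bra{\sigma_a}$ is admissible and does what you claim once $|A|>2$: it scrambles $v_i$'s below-top comparisons (so no voter is a Sharp Dictator for $\mathcal{E}$) while forcing $\text{Tr}(\Pi^{a}(\xi(\rho)))=\text{Tr}(\Pi^{a}(\rho_i))$ for all $a$, making $\Lambda\circ\mathcal{E}$ a Sharp Dictatorship in the sense of \Cref{def_dict_qscf}. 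The lemma as stated thus needs either a strengthened hypothesis of the kind you suggest or a weakened conclusion (e.g.\ that no voter is a \emph{top-alternative} sharp dictator for $\mathcal{E}$).

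Your fallback for the downstream use is also the right instinct. For \Cref{thm_violate_QGS} one only needs that QCVNE itself is not a quantum dictatorship, and this follows directly from the structure of QCV rather than from the general lemma: if the putative dictator $v_i$ sharply tops $a$ while some other voter sharply tops $a'\neq a$, the ``give the minority a shot'' step plants positive weight on rankings with $a'$ on top, and the unanimity projection cannot remove it, so $\text{Tr}(\Pi^{a}(\alpha))<1$ while $\text{Tr}(\Pi^{a}(\rho_i))=1$, breaking the sharp biconditional; the paper already argues the unsharp case this way inside the proof of \Cref{thm_violate_QGS}. So the main theorem is repairable along exactly the lines you propose, even though the general lemma does not survive in its stated form.
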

         
        \begin{proof} 
            Assume, for the sake of contradiction, that voter $v_i$ is a Sharp Dictator for $\xi$  but not a Sharp Dictator for $\mathcal{E}$ with quantum ballot $\rho_i$ and quantum ballot profile $\rho$. Societal preference is formed as $\rho_{soc} = \mathcal{E}(\rho)$ (QSWF) and then transformed into $\alpha = \Lambda(\rho_{soc})$ (QSCE).
                    
            By Definition \ref{def_sub_winner}, an alternative has complete support for being the highest ranked candidate in a quantum ballot if and only if it has complete support on being ranked above each other alternative.

            Therefore, our first observation is that for all alternatives $a$:
            \begin{equation}\label{obs_choice_to_welfare_pi}
                \text{Tr}(\Pi^{a}(\rho_i)) = 1 \Leftrightarrow \forall a' \neq a, \text{Tr}(\Pi^{a \succ a'}(\rho_i)) = 1.
            \end{equation}
            
            This same reasoning, when combined with Definition \ref{def_NQSCE} (for the Natural Extension $\Lambda$), allows for a second observation, that for all alternatives $a$
            \begin{equation}\label{obs_choice_to_welfare_fctn}
                \text{Tr}(\Pi^{a}(\alpha)) = 1 \Leftrightarrow \forall a' \neq a, \text{Tr}(\Pi^{a \succ a'}(\rho_{soc})) = 1.
            \end{equation}

            Since $\xi$ is a Sharp Dictatorship, by Definition \ref{def_dict_qscf} and the previous two statements for all alternatives $a$:
            \begin{align}
                \forall a' \neq a, \text{Tr}(\Pi^{a \succ a'}(\rho_i)) = 1 
                &\Leftrightarrow \text{Tr}(\Pi^{a}(\rho_i)) = 1 \label{obs_sharp_dict_app_1} \\
                &\Leftrightarrow\text{Tr}(\Pi^{a}(\alpha)) = 1 \label{obs_sharp_dict_app_2} \\ 
                & \Leftrightarrow \forall a' \neq a, \text{Tr}(\Pi^{a \succ a'}(\rho_{soc})) = 1 \label{obs_sharp_dict_app_3} 
            \end{align}
            where (\ref{obs_sharp_dict_app_1}) is by Observation \ref{obs_choice_to_welfare_pi}, (\ref{obs_sharp_dict_app_2}) is by Definition \ref{def_dict_qscf}, and (\ref{obs_sharp_dict_app_3}) is by Observation \ref{obs_choice_to_welfare_fctn}.
            
            This matches the definition of Sharp Dictatorship for QSWFs in \Cref{def_quantum_dict}, and therefore implies $\mathcal{E}$ is a Sharp Dictatorship. This is a contradiction because $\mathcal{E}$ is not a dictatorship by assumption. Therefore $\xi$ cannot be a Sharp Dictatorship.
        \end{proof}

        Now that almost all of the required properties have been defined, we can move to our main result.
        
    \subsection{Quantum Analogue of Gibbard-Satterthwaite}

        One last definition is the definition of onto. Classically, onto refers to a function whose image encompasses the entire co-domain (that is, the entire output space has at least one member of the domain mapped to it). This definition of onto could technically apply here, but does not match our interpretation of the original purpose of onto for GS. The mathematical definition in this case would require that every possible density operator over $\mathcal{A}$ be a possible output from the QSCF (of which there are uncountably many), which we find does not match the intuition and does not serve any additional purpose in this case. The general intuition for our definition, which we define below, is as follows: for every candidate, there is some societal ballot where the result of the QSCF has support for only that candidate.
        
        The interpretation given in \cite{alggametheorybook}, which is similar to our interpretation, states that onto ensures that the condition of $|A| > 2$ ``has bite". That is, without it a voting system could ignore all but two candidates and bypass the result. Mathematical onto seems far too strict to match this intuition more than the one we define below. Regardless, many voting systems (including QCV) might require an infinite amount of voters to form all of the density operators due to their continuous nature. In QCV specifically, this is due to the first step: calculating the Condorcet score for each alternative for each voter. In this step, QCV discretizes the input, disregarding the significantly more complex density operators that were input.
         
            \begin{definition}[A Notion of Onto for QSCFs]\label{def_onto_qscf}
                A Quantum Social Choice Function $\xi$ is onto if $\forall a \in A, \exists \rho ~s.t.~ \xi(\rho) = \ket{a}$.
            \end{definition}

        Now that the last required property has been defined, this brings us to the goal of this work, the violation of a quantum analogue of the Gibbard-Satterthwaite Impossibility Theorem (\Cref{QGS}).

         
        \begin{proof}[Proof of \Cref{thm_violate_QGS}]\label{thm_violate_QGS_proof}
            We disprove the conjecture by counterexample.

            QCV with the Natural Extension is QIC by \Cref{cor_QCVNE_is_QIC}. It is not a sharp dictatorship as follows by \Cref{lem_non_sharp_dict_QSWF_still_non_sharp_dict} and \Cref{thm_violate_arrows} (which states that QCV violates Arrow's, including a proof that it is non-dictatorial in the QSWF sense). Thus, it suffices to show that QCV with the Natural Extension is onto and not a weak dictatorship.

            QCV with the Natural Extension is clearly onto (as defined in \Cref{def_onto_qscf}) as for any alternative $a$, if every voter votes with identical ballots $\rho_i = \ket{a \succ \dots}$ with $a$ as the highest ranked candidate, QCV will output $\rho_{soc} = \rho_i = \ket{a \succ \dots}$ by enforcing unanimity (step 6 of QCV) and $\Lambda(\rho_{soc}) = \ket{a}$ by \Cref{def_NQSCE}.

            Finally, QCV with the Natural Extension is not a Weak Dictatorship as any voter can submit a ballot with support for a ranking that has some alternative $a$ as the highest ranked alternative. That ranking must then have support in the societal quantum ballot $\rho_{soc}$ formed by QCV due to giving the minority a shot (step 5 of QCV), and $\alpha = \Lambda(\rho_{soc})$ will have support for the alternative $a$ by \Cref{def_NQSCE} regardless of any other voters ballot. This violates the if and only if requirement.

            Thus, Quantum Condorcet Voting with the Natural Quantum Social Choice Extension is a Quantum Incentive Compatible Quantum Social Choice Function onto more than two alternatives that is not a quantum dictatorship.
        \end{proof}

\section{Conclusions}
    In the field of game theory, few results have been more seminal as Arrow's Impossibility Theorem \cite{ArrowsTheorem}, a significance underscored by his Nobel Memorial Prize in Economic Sciences. Similarly, the Gibbard-Satterthwaite impossibility theorem \cite{GibbardSatterthwaite1, GibbardSatterthwaite2}, while perhaps not as universally recognized, has had profound implications at the intersection of several disciplines. Our research contributes to this rich tapestry by demonstrating that, when immersed in a quantum setting, it's possible to bypass the constraints imposed by the Gibbard-Satterthwaite impossibility theorem.  This echoes the groundbreaking work of Bao and Halpern \cite{bao2017quantumarrows}, who illustrated that Arrow's Impossibility Theorem could similarly be circumvented in a quantum setting.
    
    To show this obviation, we first formulate a definition of incentive compatibility tailored to the quantum realm, accompanied by discussion on alternative definitions. We then establish that Quantum Condorcet Voting, first introduced by Sun \textit{et al.} \cite{sun2021schrodinger}, satisfies the criteria of being incentive compatible and onto (which we also adapt for the quantum context) in addition to not being a dictatorship (as demonstrated in the original work). Finally, we prove that quantum social welfare functions, when composed with what we term the Natural Extension, yield quantum social choice functions that retain their properties of IIA and non-dictatorship. Throughout our exposition, we've endeavored to craft definitions and theorems with extendibility and wider applicability in mind and as such, provide some discussion of alternate perspectives and formulations throughout.
    
    \subsection{Future Works}
        The horizon of potential research in this domain is vast, with several avenues stemming from the discussions in the main body of this work. Here, we outline a few promising directions:

        \begin{enumerate}
            \item \textbf{Strategic Manipulation:} A deeper exploration into alternative definitions of Strategic Manipulation is warranted. Specifically, the dynamics involving strong preferences, $p$-preferences (preferences surpassing an arbitrary threshold $p \in (0, 1)$), and the extent to which a society's ballot in QCV can be influenced to align more closely with an individual voter's ballot hold promise. Some preliminary investigations in this domain have already been initiated by our team.

            \item \textbf{Quantum Voting Functions:} It would also be intriguing to study whether other quantum voting functions, such as the Quantum Majority Rule \cite{bao2017quantumarrows}, also sidestep the Quantum Gibbard-Satterthwaite conjecture. Given that the original violation of the quantum analogue of Arrow's Theorem was achieved using Quantum Majority Rule, this function might exhibit similar properties. Furthermore, there exists a significant opportunity to define additional quantum voting schemes, drawing inspiration from their classical counterparts, which may lead to intriguing results and insights.

            \item \textbf{Quantum Analogues:} Crafting quantum counterparts for other classical voting system properties and associated possibility/impossibility results, like Sen's Impossibility Theorem \cite{Sen1970impossibility} and the Muller-Satterthwaite Theorem \cite{muller1977equivalence}, could yield greater insights in this wide open field of study analogous to our findings and those by Bao and Halpern \cite{bao2017quantumarrows} and Sun \textit{et al.} \cite{sun2021schrodinger}.

            \item \textbf{Beyond Ranked Voting:} Investigating Quantum Social Choice and Welfare functions that are not merely ranked voting is another fertile ground. Systems such as Combined approval voting \cite{Felsenthal1989approvalvoting}, which allow voters to categorize each alternative as 'for', 'against', or 'neutral' might exhibit very distinct dynamics in a quantum context.
        \end{enumerate}

        In conclusion, our findings underscore the transformative potential of quantum mechanics in reshaping foundational concepts in game theory. By introducing quantum analogues and redefining classical axioms, we not only challenge established impossibility theorems but also pave the way for a richer understanding of strategic interactions in complex systems. As the field of quantum game theory continues to evolve, we anticipate that it will offer novel solutions to long-standing challenges and further bridge the gap between classical and quantum decision-making.

   %
   %
   %

\section*{Acknowledgements}
    We would like to thank Dr. Alex Psomas for his constructive feedback and guidance.

\newpage
\begin{appendices}
\section{Additional Readings}\label{app_lit_review}

%

    
    \subsection{Brief Introduction to the History of Quantum Computing}
        Quantum computing has been the subject of increased excitement and study since the discovery of Shor's fast factoring and discrete logarithm algorithms \cite{Shor_1997}. This has lead to what appears to be a significant speedup over classical systems in certain tasks. One such task is solving the \textit{hidden subgroup problem}, which is closely connected with the discovery of the Quantum Fourier Transform and its fundamental impact in quantum computing theory. Another is \textit{quantum search algorithms}, which have increasing interest due not to their incredible speedup (merely quadratic), but rather to their very wide set of known and potential applications. A third such task is \textit{simulation of quantum systems}, a highly promising field that is already showing benefits in the Noisy Intermediate Stage Quantum computing era (NISQ era) \cite{MikeAndIke2011, olson2017quantum, Preskill2018quantumcomputingNISQera}.
        
        Despite the incredible amount of research published in the field in the last four decades or so, there are still very large unanswered questions. Some examples of these questions are exactly how powerful quantum computers are, whether they are more powerful than classical computers, and where their power comes from. The literature has yet to determine concrete answers to these questions, though fast factoring and other algorithms give strong evidence to their power and advantage. Regarding the power of quantum computation over classical computation, it has been shown that if we prove quantum computation is more powerful than classical computation then \textsc{P} is not equal to \textsc{PSPACE}. There have been many attempts to prove this relationship classically, so it seems reasonable to assume that proving quantum computation is more powerful than classical computation will be difficult.
    
        One important application that has only recently begun to be explored in the past couple decades is quantum game theory. This has many branches, from quantum cryptography and encryption to quantum games to pure game theory results involving Nash equilibria. We give a brief survey of each of these fields following a different perspective on quantum computing.
    
        \subsubsection{Statistical Perspective} One perspective we can take on quantum computing is from statistics. The \textit{output} of a quantum circuit (a quantum state) can be thought of as a joint probability distribution over the qubits. The primary advantage that quantum computers give is the ability to efficiently perform certain manipulations (some of which are thought to be classically intractable \cite{gottesman1998heisenberg, aaronsonGottesman2004ImprovedSimulation}) on the distribution in order to achieve desired end states. This provides the core of the \textsc{SampBQP} - the class of sampling problems efficiently solvable on a quantum computer \cite{Lund2017}. It is thought that sampling from certain quantum probability distributions classically (such as the problems in \textsc{SampBQP}) requires exponentially scaling resources \cite{Boixo2018, Aaronson_2005, Bremner_Jozsa_Shepherd_2010, Bouland2019, aaronson2011ComplexityLinearOptics, Fujii_Morimae_2017, Bremner2016AvgCase}. This line of thinking has also been the basis of at least one quantum supremacy demonstration on the basis of random quantum circuit sampling \cite{Arute2019}.
        
        For the sake of accuracy, it is important to note that while the final measurement of a quantum circuit acts similarly to sampling from a joint probability distribution, in reality it is sampling from a circuit that, at every point, is some form of an \textit{imaginary} joint probability distribution \cite{WOOLNOUGH2023}. The imaginary part allows for new ways of \textit{interfering} (strategically manipulating your distribution) to expose certain problem properties. A quantum circuit utilizes quantum mechanics to build such distributions, which appears to allow for more efficient manipulations than classical computers.
    
    \subsection{Quantum Cryptography and Encryption}
        The introduction of quantum computing and Shor's algorithm allow for fundamental attacks on classical encryption methods derived from computational complexity, particularly the integer factorization problem (or the discrete logarithm problem, etc.), and schemes based on these problems. However, quantum mechanics also allows for new methods of cryptography, introduced as early as the 1970's and 80's \cite{wiesner1983conjugate, Bennett1985Crypto} and continue to be be utilized through present-day (a simple arXiv or Google Scholar search for quantum cryptography in the last 2 years yields thousands of results developing and implementing previous protocols and introducing new ones). These protocols usually involve using a combination of quantum and classical information sharing and have many potential advantages involving the ability to consistently detect eavesdroppers, produce messages that are resistant to single-channel attacks, and perform \textit{provably} secure key distribution over public channels \cite{Gisin2002Crypto, MikeAndIke2011}. Such schemes can then be used in many places where classical cryptography is used, such as blockchain and the internet.  
    
    \subsubsection{Quantum Blockchain} 

        An additional impact of the rising prominence of quantum computing lies in its profound implications for the field of blockchain. Blockchain is a form of distributed ledger with high tolerance for unreliable actors. This technology is largely based on digital signatures and cryptographic hash functions. Classical digital signatures can be attacked via quantum computing \cite{Schneier_1994}, which has wide reaching implications within blockchain. The security of blockchain schemes can be increased by using Post-Quantum Cryptography \cite{Bernstein2009}, a very actively researched field which deals with ``quantum-safe" classical cryptographic algorithms. Some, however, view the assumptions of post-quantum cryptography as unproven \cite{Kiktenko_2018}, and propose specifications for a quantum blockchain that utilizes quantum key distribution, guaranteeing information-theoretic security through use of well-proven and well-studied quantum cryptography results \cite{Kiktenko_2018, sun2018quantum}. Additionally, quantum blockchain is intrinsically linked to the idea of quantum voting, due to the distributed decision making processes within it, and there have been works creating voting schemes utilizing quantum blockchain as well \cite{Sun2019QBlockchain}.

    \subsection{Quantum Game Theory}

        As mentioned in the previous section, the introduction of quantum ideas has had profound effects on the domain of Game Theory, spawning an entirely new field. This new field adds ideas such as superposition, entanglement, and interference to traditional game theory. Quantum Game Theory, though still in its formative years, promises to reshape our understanding of strategic interactions by leveraging the counterintuitive principles of quantum mechanics \cite{flitney2002introduction}.
        
        Classical game theory, with its roots in economics, mathematics, and social science, provides a structured framework to analyze strategic interactions among rational decision-makers. It revolves around concepts like Nash equilibria, where players' strategies are mutually best responses to each other, and theorems like Gibbard-Satterthwaite, which delve into the impossibilities of certain fair decision-making processes. However, the classical framework, with its deterministic and static nature, has its limitations \cite{Ichikawa_2008}.
        
        Enter Quantum Game Theory. Beyond just introducing quantum versions of classical games, it redefines the very theoretical underpinnings of strategic decision-making. The foundational shift from classical to quantum in game theory is not merely a change in mathematical formalism. It introduces a richer strategic space where players can harness quantum phenomena to their advantage \cite{HANAUSKE2007650}. Superposition allows for strategies to exist in a combination of states, offering a more fluid and dynamic strategic landscape. Entanglement introduces correlated strategies between players that transcend classical bounds, leading to deeper cooperative and competitive dynamics\cite{flitney2002introduction}.

        In the subsequent subsections, we offer a concise overview of select works in Quantum Game Theory. We will touch upon the core principles that define quantum games, providing readers with an understanding of their unique characteristics. Following this, we will briefly discuss the interplay between Nash equilibrium and notable quantum game theory results that have emerged in recent literature. Lastly, we will highlight some insights from quantum voting systems and their intriguing connection to Arrow's theorem. While our review is by no means exhaustive, it aims to provide a snapshot of the current landscape and set the stage for our primary focus: bypassing the Gibbard-Satterthwaite theorem in a quantum context.
    
        \subsubsection{Quantum Games}
        
        A closely related field (or even subfield) to quantum game theory is quantum games. Quantum games take normal games and allow for a much richer strategic landscape by incorporating superposition, entanglement, and randomness of quantum measurements into any and all parts of new and old games. A simple example motivated by classical game theory is the quantum Prisoners' Dilemma, which has been shown to allow the prisoners to ``escape" the dilemma if quantum strategies are used \cite{eisert1999quantum}. From this seminal work, a field of quantum game theory quickly blossomed, and has been applied to other classical game theory problems including Battle of the Sexes, the Monty-Hall problem, and many others \cite{flitney2002introduction}. Quantum game theory has also opened the door to many new areas such as quantum combinatorial games, which are versions of classical combinatorial games enriched by allowing quantum moves \cite{burke2020quantum}. While there is a plethora of research already in this domain, many areas of game theory are just starting to be explored through the quantum lens.
        
        \subsubsection{Nash Equilibrium and Quantum Game Theory Results}

        Getting into the more theoretic aspects of game theory, there are several results indicating that quantum computing allows better or more efficient calculations of solutions in topics such as Nash equilibrium, online learning algorithms, game strategies and representations, and more.  In Nash equilibrium, there have been many recent results in using online quantum algorithms to get $\varepsilon$-approximate Nash equilibrium of zero-sum games, up to and including a zero-regret result with cost $\Tilde{O}(\sqrt{m+n}/\varepsilon^{2.5})$ \cite{gao2023logarithmic, vanapeldoorn_et_al2019, bouland2023quantum}. Gao \textit{et al.} provide a good summary of the work in this area \cite{gao2023logarithmic}. Good reviews of the field from a more game-theoretical perspective can be found from Piotrowski and Sladkowski \cite{Piotrowski2003} or Flitney and Abbott \cite{flitney2002introduction}, as well as a number of other further readings \cite{landsburg2004quantum, piotrowski2003stage}.

        \subsubsection{Quantum Voting Systems and Arrow's Theorem} 
        A particularly interesting result in quantum game theory is the bypassing of a formulation of Arrow's Theorem in a quantum setting. Arrow's Theorem is a seminal impossibility result in game theory that states that no social welfare function with more than three alternatives can hold the properties of unanimity and independence of irrelevant alternatives (IIA) without being a dictatorship \cite{ArrowsTheorem}. Intuitively, the United States voting system is not a social welfare function and thus this result does not apply to it; while there is another impossibility result, the Gibbard-Satterthwaite (GS) Theorem \cite{GibbardSatterthwaite1, GibbardSatterthwaite2}, that gives a similar impossibility result for (ordinal) social choice functions (while the US voting system \textit{is} a social choice function, it is not a ordinal social choice function), the US voting system gets around being a dictatorship (according to the theorem) by violating another one of its properties, namely incentive compatibility (the idea that if you know your choice alternate $A$ will never win, it is not beneficial for you to switch to your second-favorite alternate $B$ so that $B$ will win, and thus misrepresent your true choice).
        
        Originally shown by Bao and Halpern \cite{bao2017quantumarrows}, reformulation of the voting system and properties such as unanimity, IIA, and dictatorship into a quantum setting allows for certain quantum voting systems to bypass a quantum Arrow's Theorem. Bao and Halpern \cite{bao2017quantumarrows} showed that a Quantum Majority Rule bypasses the quantum version and Sun \textit{et al.} \cite{sun2021schrodinger} showed that a Quantum Condorcet Voting system bypasses the quantum version similarly. Sun \textit{et al.} additionally gives alternative definitions of certain quantum properties, with more nuanced proofs. Voting systems are used in a large variety of useful settings outside political voting, such as auctions (for humans or for ads), blockchain, IoT, etc.
        Further review of quantum voting systems is provided after a high-level discussion of quantum vs probabilistic voting systems.
    
    \subsection{Quantum vs Probabilistic Voting Systems}


        \subsubsection{Probabilistic Voting Systems}
        A natural connection to quantum voting systems is probabilistic voting systems, due to the appearance of similar outputs. Probabilistic voting systems in relation to the GS impossibility theorem were first studied by Gibbard himself in 1977, where he showed that strategy proof (SP) randomized voting rules that are decision schemes are probability mixtures over deterministic rules, each of which are either a dictatorship or can only produce 2 alternates \cite{Gibbard1977Random, Procaccia_2010}. Working within the class of SP randomized voting rules, Procaccia \cite{Procaccia_2010} showed that using certain SP randomized rules, one can achieve a $\gamma$-approximation of the deterministic version of those rules. Though most of the $\gamma$s Procaccia proved were not very optimal, they provided a solid foundation to begin studying other rules in the same context. In relation to SP-ness, BBF \cite{brandl2015incentives}, and later ALR \cite{Aziz_Luo_Rizkallah_2018}, showed that there are reasonable randomized voting rules that satisfy \textit{strong participation}, where participation in the vote strictly increases utility for an agent. This notion of participation is based on stochastic dominance (SD), which is likely the most popular approach for defining how resulting probability mixtures are compared \cite{lederer2021non} (for example, voter utility, which is particularly useful when discussing SP-ness). Regardless, such notions of SP-ness all require ordinal preferences as input, which does not make as much sense in a quantum setting. An additional notion of truthfulness (which thus implies a notion of SP-ness) is \textit{lex-truthfulness}, introduced by Chakrabarty and Swamy \cite{Chakrabarty2014welfare}, a stronger notion of truthfulness than SD-SP-ness which allows for a class of voting mechanisms that bypass GS. While there are other notions that do the same, lex-truthfulness more directly deals with probability mixtures over rankings. Finally, a more useful notion discussed by Brandt \cite{brandt2017rolling} is PC-SP-ness, where a voter prefers probability mixture $a$ over $b$ if $a$ is more likely to return a better outcome. This notion of SP-ness is more interesting from a quantum perspective, as it more intuitively deals with voter utility in probabilistic contexts and can better handle a probabilistic mixture over rankings instead of over candidates.
        
        Another important set of rules that bypass the GS results are \textit{approximate} SP randomized rules with approximation ratio $\varepsilon$. Birrell and Pass \cite{birrell2011approximately} showed that the existence of non-trivial approximate SP randomized voting rules depended very heavily on $\varepsilon$. They showed that $\varepsilon = \omega(1/n)$ allowed for natural sets of these rules, but that $\varepsilon$ as strong as $o(1/n^2)$ prevented existence of such rules. Their measure of ``closeness" is additionally more nuanced than standard definitions, but is nevertheless natural.

        \subsubsection{Quantum Voting Systems are Not Strictly Probabilistic} \label{sec_intro_probvsquant}
        All of the aforementioned research focuses on randomized voting rules that take classical ordinal preferences as input. While it is important to discuss such results, they do not apply in a quantum setting for two primary reasons. First, quantum voting systems provide a rich setting for voter preferences, allowing for both preferences and ballots which are entangled or are in superposition. In this context, a ballot only in superposition can be understood as a probability distribution over all possible total ordinal preferences. Alternatively, it can be understood as a joint probability distribution over alternatives. For example, if the selection of one alternate for the top position prevents selection of another alternate for the second position. Two (or more) ballots which are entangled allows each voter involved to coordinate their votes. For example, if agent $a$ and $b$ decided to submit identical probabilistic ballots, if $a$'s ballot is measured (sampled) to be alternate 1, agent $b$'s ballot will also produce alternate 1 with 100\% probability despite the original distribution. In short, entanglement allows for voting along a party line (or otherwise correlated with one) and superposition allows for split-decision preferences.
        
        The second reason these SP randomized voting results do not apply in a quantum system is that they do not consider joint probability distributions in the output. Typically, studied voting systems deal directly with probability mixtures over individual alternates, where joint probability mixtures allow for probability mixtures over \textit{total ordinal rankings}. A trivial example of this would be president-vice president pairings. Given presidential pair $a$ and $b$, and pair $c$ and $d$, if all are considered in a single ordinal preference, a joint probability mixture could allow for \textit{either} $a$ in position 1 and $b$ in position 2 \textit{or} $c$ in position 1 and $d$ in position 2, but not a situation where $a$ is in position 1 and $d$ is in position 2 (or vice versa). 
        An additional consideration is the lack of an inherently quantum processes in SP randomized voting rules. This, however, does not end up being relevant from a purely theoretic standpoint as long as the input and output to the voting system remain the same (i.e. simple ordinal preferences as input, probabilistic mixture over alternates as output).

        \subsubsection{Justification for \textit{Quantum} Voting Systems}
        One might object that quantum settings are not intuitive for a voter to understand, and thus we should restrict notions such as truthfulness and utility to how they are represented in either classical or traditional probabilistic settings, as a practical matter. While the intrinsic intellectual merits of quantum voting are particularly intriguing, it is also not natural, practically, to restrict such a powerful domain to a classical one, even if it is probabilistic. Quantum computers can simulate all classical operations efficiently \cite{MikeAndIke2011}, but even traditional probabilistic settings cannot mimic and do not consider foundational quantum effects such as entanglement and superposition, respectively. Thus, Bao and Halpern \cite{bao2017quantumarrows} reformulated Arrow's Theorem, voting systems, and desirable voting properties to allow for quantum effects. In this new setting, it is an open problem whether all probabilistic voting systems can bypass the reformulated Arrow's Theorem, but it would be trivial to show that two can. In particular, the voting systems proposed by Bao and Halpern, and Sun \textit{et al.}, can be reformulated as probabilistic voting systems that produce joint probability distributions over the alternate rankings, while keeping the results of \cite{bao2017quantumarrows, sun2021schrodinger} and bypassing the quantum formulation of Arrow's Theorem.
    
        An interesting avenue of research to pursue would be studying probabilistic game theory results in a quantum context. To the authors' knowledge, there are no results yet published in this area. Quantum computing could contribute to the field of probabilistic game theory by indicating which probability distributions (or operations on them) are more interesting to study, due to their resource efficiency using a quantum circuit (both space and time), or by providing a more restricted/structured formalism to study probabilistic results with. In particular, steps performed in a voting system on a quantum computer deal each step with \textit{imaginary} joint probability distributions and get to use the full power of quantum mechanics when manipulating them. This allows for new ways of manipulation including \textit{interference}, a powerful quantum-mechanical tool for amplifying groups of or individual probability amplitudes.

    \subsection{Voting Protocols in a Quantum Setting}

        Returning to voting protocols, there are not many that have been created in a quantum setting. Sun \textit{et al.} \cite{sun2023LogicOperators} gave appropriate distinctions between \textit{quantum secured voting} and \textit{quantum computed voting}, with the former representing classical voting secured by quantum protocols and the latter representing fundamentally quantum voting. We will focus on quantum computed voting protocols, which take a more social choice theory perspective on quantum voting.
        
        Most of these quantum computed voting protocols are focused on binary decisions and ensuring security or unanimity for their voters. For example, Vaccaro \textit{et al.} \cite{vaccaro2007AnonymousVote} have formulated protocols for anonymous voting and surveying of binary choices such as comparative ballot, which is a protocol where neither party can know how the other voted beforehand, and the tally-man can only determine if they voted identically. Similarly, they show a protocol for anonymous surveying such that the tallyman cannot determine the tally before the end, and the voters acting on the quantum state similarly cannot determine the current count. In a similar vein, Sun \textit{et al.} devised a voting method based on quantum blockchain utilizing committed ballots for making binary decisions anonymously \cite{Sun2019QBlockchain}. In terms of ranked voting over alternatives as utilized later in this work, two major works are those of Bao and Halpern with Quantum Majority Rule \cite{bao2017quantumarrows} and Sun \textit{et al.} with Quantum Condorcet Voting \cite{sun2021schrodinger}, both of which were shown to bypass their crafted quantum analogues of Arrow's Theorem, as previously discussed.
    
        Finally, in an attempt to make quantum voting protocols more practical, Sun \textit{et al.} \cite{sun2023LogicOperators} introduced quantum building blocks for quantum voting protocols. Using quantum voting rules termed quantum logical veto and quantum logical nomination, they formulate a quantum logical majority vote, quantum logical Condorcet vote, and finally a quantum average rule. These building blocks present an interesting opportunity to study more practical implementations of quantum voting protocols in a formulaic and structured way, and allow quantum voting rules and protocols to be compared and contrasted directly with classical voting rules and protocols.


    \section{More Definitions and Theorems}\label{app_defns}
        \subsection{Voting Systems}
            \begin{definition}[Rankings \cite{alggametheorybook}]\label{def_ranking}
                Let $A$ be the set of alternatives and $\mathcal{V}$ be the set of voters. We define the set of rankings $\mathcal{L}(A)$ as the set of linear (total) orderings of $A$, and $R_i \in \mathcal{L}(A)$ as the ranking by $v_i$. This ranking is sometimes also referred to as $\prec_{i}$, used in the sense of $a_k \prec_{i} a_j$, saying that alternative $j$ is preferred to alternative $k$ in ranking $i$.
            \end{definition}

        \subsubsection{Classical Social Choice Functions (SCFs) and Social Welfare Functions (SWFs)}
            The results of a classical voting system determine two classes of functions, one for the case of a single winner selected from the alternatives, and one for a total ranking determined by each voters ranking.
                \begin{definition}[Classical Social Choice Function (SCF) \cite{alggametheorybook}]\label{def_SCF}
                    A \textit{Classical Social Choice Function} is a function $f: \mathcal{L}(A)^{n} \to A$. This takes the ranked preferences of each voter, and determines a single victor from among the set of alternatives.
                \end{definition}
    
                \begin{definition}[Classical Social Welfare Function (SWF) \cite{alggametheorybook}]\label{def_SWF}
                    A \textit{Classical Social Welfare Function} is a function $F: \mathcal{L}(A)^{n} \to \mathcal{L}(A)$. This takes the ranked preferences of each voter, and determines a final ranking from the set of possible rankings.
                \end{definition}

            Each of these has different associated results, including Arrow's Theorem and Gibbard-Shatterwaite, though there is a connection between them classically \cite{alggametheorybook}.
        \subsection{Arrow's and Properties}
            \begin{theorem}[Arrow's Impossibility Theorem \cite{alggametheorybook}]\label{thm_arrows}
                Every SWF over a set of more than 2 alternatives $(|A| > 2)$ that satisfies Unanimity and Independence of Irrelevant Alternatives is a Dictatorship.
            \end{theorem}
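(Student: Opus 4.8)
The plan is to prove \Cref{thm_arrows} by the classical pivotal-voter route: exhibit a single voter who by herself controls where the social ranking $F$ places one distinguished alternative, promote her to a dictator over every pair avoiding that alternative, and then bootstrap to a full dictatorship. The whole argument runs on repeated use of Unanimity and Independence of Irrelevant Alternatives (IIA) to pass between carefully chosen profiles, anchored by one combinatorial lemma.

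\textbf{Extremal lemma.} First I would show that for any alternative $b$ and any profile in which every voter ranks $b$ either at the very top or the very bottom of her order, $F$ also ranks $b$ at the very top or the very bottom. If not, pick distinct $a,c\neq b$ with $a$ socially above $b$ and $b$ socially above $c$, and reshuffle every ballot so that $c$ sits above $a$ while $b$ stays extremal for each voter. By IIA the social comparisons on $\{a,b\}$ and on $\{b,c\}$ are unchanged, so society still ranks $a\succ b\succ c$ and hence $a\succ c$; but now every voter ranks $c\succ a$, contradicting Unanimity.

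\textbf{Pivot and local dictator.} Starting from a profile in which everyone ranks $b$ last --- so $b$ is socially last by Unanimity --- move $b$ to the top of the voters' ballots one at a time; after all $n$ moves Unanimity makes $b$ socially first, so there is a least index $i^{*}$ at which $b$'s social position changes, and by the extremal lemma it jumps straight from last to first. Let $\mathcal{P}^{-}$ and $\mathcal{P}^{+}$ be the profiles just before and just after $i^{*}$ moves $b$. For any pair $\{a,c\}$ with $b\notin\{a,c\}$, consider a profile in which $i^{*}$ ranks $a\succ b\succ c$, every other voter keeps $b$ in the extremal position it occupies in $\mathcal{P}^{\pm}$, and the other voters' order of $a$ versus $c$ is arbitrary: the $\{a,b\}$-restriction then agrees with $\mathcal{P}^{-}$, so IIA forces $a\succ b$ socially, while the $\{b,c\}$-restriction agrees with $\mathcal{P}^{+}$, so IIA forces $b\succ c$ socially; hence $a\succ c$ socially, and a final application of IIA makes this hold in every profile where $i^{*}$ ranks $a\succ c$ (symmetrically for $c\succ a$). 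Thus $i^{*}$ dictates every pair avoiding $b$.

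\textbf{Global dictator, and the main obstacle.} To reach pairs containing $b$, pick a third alternative $d$ (available since $|A|>2$) and rerun the construction to obtain a pivot $j^{*}$ dictating every pair avoiding $d$. The crux --- the step I expect to be hardest --- is identifying $j^{*}$ with $i^{*}$: when $|A|=3$ the two ``avoiding'' families overlap in only a single pair, so I would build an auxiliary profile in which $i^{*}$ ranks $a\succ d\succ b$ while every other voter ranks $d\succ b\succ a$, use $i^{*}$'s established decisiveness over $\{a,d\}$ together with Unanimity on $\{d,b\}$ to force $a\succ b$ socially, and note that this is incompatible with $j^{*}\neq i^{*}$, since then $j^{*}$ would be one of those other voters and would instead force $b\succ a$. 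Once $i^{*}=j^{*}$, this common voter dictates every pair and is a dictator, finishing the proof; everything besides this merging step is routine manipulation of profiles under Unanimity and IIA.
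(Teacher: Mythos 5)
The paper does not actually prove \Cref{thm_arrows}: it is stated in the appendix as a classical background result and cited to the algorithmic game theory textbook, so there is no in-paper argument to compare yours against. Your proposal is the standard pivotal-voter (Geanakoplos-style) proof --- essentially the one in the cited source --- and it is correct in its main lines: the extremal lemma, the hybrid-profile argument showing the pivot $i^{*}$ dictates every pair avoiding $b$, and the identification of pivots are all sound. Two small remarks. First, your merging argument via the auxiliary profile ($i^{*}$: $a\succ d\succ b$, all others: $d\succ b\succ a$) works, but the more common shortcut is to observe that between $\mathcal{P}^{-}$ and $\mathcal{P}^{+}$ only $i^{*}$ changed her ballot while the social ranking of $a$ versus $b$ flipped, which already contradicts decisiveness of any $j^{*}\neq i^{*}$ over $\{a,b\}$. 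Second, with a single fixed $d$ the two families ``pairs avoiding $b$'' and ``pairs avoiding $d$'' fail to cover the pair $\{b,d\}$ itself when $|A|=3$; you need to run the construction and the merging step once for each pair containing $b$, choosing a suitable third alternative each time (your identification argument is uniform in $d$, so this is a one-line repair rather than a gap in the idea).
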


            \subsubsection{Unanimity} 
            First, we will discuss the axiom of Unanimity. Classically, Unanimity is defined as follows.
                \begin{definition}[Unanimity \cite{alggametheorybook}]\label{def_unanimity}
                    An SWF $F$ is \textit{Unanimous} (or has the property of Unanimity) if for two arbitrary alternatives $a_j, a_k \in A$ and for $\prec ~ =  F(R_1, \dots, R_n)$,
                    $$(\forall v_i \in \mathcal(V), a_j \prec_i a_k) \Rightarrow a_j \prec a_k.$$
                \end{definition}

            In more colloquial terms, unanimity is the idea that if all voters agree on the relative positioning of two alternatives, the final ranking should conform to that relative positioning. The reasons this property is desirable are intuitive.

            This naturally leads to a formulation of a quantum analogue. In Sun \textit{et al.} \cite{sun2021schrodinger}, it is defined as follows.
            
                \begin{definition}[Quantum Unanimity \cite{sun2021schrodinger}]\label{def_q_unanimity}
                    A QSWF $\mathcal{E}$ is \textit{Sharply Unanimous} (or has the property of Sharp Unanimity) if it satisfies the following:
                    \begin{itemize}
                        \item For all quantum societal ballot profiles $\rho$ and all pairs of alternatives ($a$, $a'$), if Tr$(\Pi^{a \succ a'}(\rho_i)) = 1$ for each voter $v_i$, then Tr$(\Pi^{a \succ a'}(\mathcal{E}(\rho))) = 1$
                    \end{itemize}
                    
                    A QSWF $\mathcal{E}$ is \textit{Unsharply Unanimous} (or has the property of Unsharp Unanimity) if it satisfies the following:
                    \begin{itemize}
                        \item For all quantum societal ballots $\rho = \mathcal{E}(\rho_1 \otimes \dots \otimes \rho_n)$ and all pairs of alternatives ($a$, $a'$), if Tr$(\Pi^{x \succ y}(\rho_i)) > 0$ for each voter $v_i$, then Tr$(\Pi^{x \succ y}(\mathcal{E}(\rho))) > 0$
                    \end{itemize}
    
                    A QSWF $\mathcal{E}$ satisfies the quantum unanimity condition if it satisfies both sharp and unsharp unanimity conditions.
                \end{definition}
            
            Otherwise stated, $\mathcal{E}$ is \textit{Sharply Unanimous} if it follows that if each voter individually ranked x above y with probability 1, then the social welfare function will also rank x above y with probability 1.  The same interpretation holds for \textit{Unsharp Unanimity}, with nonzero probabilities in exchange for probabilities of 1.
        
        \subsubsection{Independence of Irrelevant Alternatives (IIA)} 
            Second, we will discuss the axiom of the Independence of Irrelevant Alternatives. Classically, the axiom of the Independence of Irrelevant Alternatives is defined as follows.
             
                \begin{definition}[Independence of Irrelevant Alternatives (IIA) \cite{alggametheorybook}]\label{def_iia}
                    An SWF $F$ is \textit{Independent of Irrelevant Alternatives} (or has the property of Independence of Irrelevant Alternatives) if for two arbitrary alternatives $a_j, a_k \in A$ and where $F(R_1, \dots, R_i, \dots, R_n) = \prec$ and $F(R_1', \dots, R_i', \dots, R_n') = \prec'$
                    $$\forall i(a_j \prec_i a_k \Leftrightarrow a_j \prec_i' a_k) \Rightarrow (a_j \prec a_k \Leftrightarrow a_j \prec' a_k).$$
                \end{definition}

            In other words, if the relative rankings of two alternatives don't change for each voter, the relative rankings of those two alternatives should stay the same in the results from those regardless of the existence or movement of other alternatives between the two sets of rankings.

            Sun \textit{et al.} again provide a definition of this as follows.
            
                \begin{definition}[Quantum Independence of Irrelevant Alternatives \cite{sun2021schrodinger}]\label{def_qiia}
                    A QSWF $\mathcal{E}$ satisfies the \textit{Sharp IIA} condition if it satisfies the following:
                    \begin{itemize}
                        \item For all quantum ballot profiles $\rho$ and $\rho'$ and all pairs of alternatives ($a$, $a'$), if Tr$(\Pi^{a \succ a'}(\rho_i))$ = Tr$(\Pi^{a \succ a'}(\rho_i' ))$ for each voter $v_i$, then Tr$(\Pi^{a \succ a'}(\mathcal{E}(\rho))) = 1$ implies that Tr$(\Pi^{a \succ a'}\mathcal{E}(\rho'))) = 1$
                    \end{itemize}
    
                    A QSWF $\mathcal{E}$ satisfies the \textit{Unsharp IIA} condition if it satisfies the following:
                    \begin{itemize}
                        \item For all quantum ballot profiles $\rho $ and $\rho'$ and all pairs of alternatives ($a$, $a'$), if Tr$(\Pi^{a \succ a'}(\rho_i))$ = Tr$(\Pi^{a \succ a'}(\rho_i' ))$ for each voter $v_i$, then Tr$(\Pi^{a \succ a'}(\mathcal{E}(\rho))) > 0$ implies that Tr$(\Pi^{a \succ a'}(\mathcal{E}(\rho'))) > 0$
                    \end{itemize}
                \end{definition}

            Otherwise stated, if two different quantum ballot profiles both have support (with probabilities 1 or $>$0 for sharp and unsharp, respectively) for a particular ranking $x \succ y$, then the two quantum ballot profiles should reflect very similar support in their transformed rankings.

            \subsubsection{Dictatorship}
            Third, we will discuss the axiom of Dictatorship. Classically, the axiom of Dictatorship is defined as follows.
                \begin{definition}[Dictatorship \cite{alggametheorybook}]\label{def_dict}
                    Voter $i$ is a \textit{dictator} in social welfare function $F$ if $\forall \prec_1 \dots \prec_n \in L(A)$, $F(\prec_1 \dots \prec_n) = \prec_i$. That is, the societal preference in a dictatorship is that of the dictator.
                \end{definition}

            In other words, the societal output is completely determined by a single voter's ballot.

            Sun \textit{et al.} again provide a definition of this as follows.
            
                \begin{definition}[Quantum Dictatorship \cite{sun2021schrodinger}]\label{def_quantum_dict}
                    A QSWF $\mathcal{E}$ satisfies sharp dictatorship if there is a voter $v_i$ such that:
                    \begin{itemize}
                        \item For all quantum ballot profiles $\rho = (\rho_1,\dots,\rho_n)$ and all pairs of candidates ($x$, $y$), \newline
                        Tr$(\Pi^{x\succ y} \rho_i) = 1 \Leftrightarrow$ Tr$(\Pi^{x \succ y}(\mathcal{E}(\rho))) = 1$.
                    \end{itemize}
                    A QSWF $\mathcal{E}$ satisfies unsharp dictatorship if there is a voter $v_i$ such that:
                    \begin{itemize}
                        \item For all quantum ballot profiles $\rho = (\rho_1,\dots,\rho_n)$ and all pairs of candidates ($x$, $y$), \newline
                        Tr$(\Pi^{x\succ y} \rho_i) > 0 \Leftrightarrow$ Tr$(\Pi^{x \succ y}(\mathcal{E}(\rho))) > 0$.
                    \end{itemize}
                    A QSWF $\mathcal{E}$ satisfies quantum dictatorship if it satisfies both sharp and unsharp dictatorship.
                \end{definition}

            Sharp dictatorship states that whenever the dictator prefers $x$ to $y$ with certainty, then so does the society. Unsharp dictatorship states that whenever the dictator prefers $x$ to $y$ with positive probability, then so does the society.            

        \subsection{Quantum analogue of Arrow's Theorem}
            As defined by Sun \textit{et al.},
            \begin{conjecture}[Quantum analogue of Arrow's Theorem \cite{sun2021schrodinger}]\label{conj_q_arrows}
                Every QSWF over a set of more than 2 alternatives $(|A| > 2)$ that satisfies quantum unanimity (both sharp and unsharp) and quantum independence of irrelevant alternatives (both sharp and unsharp) is a quantum dictatorship (either sharp, unsharp, or both).
            \end{conjecture}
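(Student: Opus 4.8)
The statement as worded asserts a quantum Arrow-type impossibility, but in light of \Cref{thm_violate_arrows} it is in fact \emph{false}; the plan is therefore to disprove \Cref{conj_q_arrows} by exhibiting an explicit counterexample rather than to establish it. The natural witness is Quantum Condorcet Voting $\mathcal{E}_{qcv}$ for any $|A| \geq 3$: I would show that it satisfies every hypothesis of the conjecture --- sharp and unsharp unanimity together with sharp and unsharp IIA --- while failing to be a quantum dictatorship in either the sharp or unsharp sense. This reduces the task to six property checks, which split cleanly into the \emph{easy} axioms driven by the two non-standard steps of QCV and the \emph{hard} axiom (IIA), where the discretization in step 1 creates genuine tension.

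Unanimity and non-dictatorship I expect to follow directly from steps 5 and 6. For sharp unanimity, if $\text{Tr}(\Pi^{a \succ a'}(\rho_i)) = 1$ for every voter, then $(a,a')$ is among the pairs ``encoded by all the $\rho_i$'', so the enforce-unanimity projection (step 6) lands the societal state in $S^{a \succ a'}$ and forces $\text{Tr}(\Pi^{a \succ a'}(\rho_{soc})) = 1$. For unsharp unanimity, positive support from every voter in particular means $(a,a')$ is encoded by at least one voter, so the ``give the minority a shot'' step (step 5) spreads weight $\delta > 0$ onto $S^{a \succ a'}$, guaranteeing $\text{Tr}(\Pi^{a \succ a'}(\rho_{soc})) > 0$. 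The same minority-shot mechanism breaks dictatorship in both forms: since any pair encoded by even a single voter receives positive societal support, a candidate dictator $v_i$ who sharply ranks $a \succ a'$ cannot force society to full support whenever some other voter encodes $a' \succ a$, breaking the sharp ``if and only if''; symmetrically, a second voter can inject support for a pair $v_i$ gives none to, breaking the unsharp one.

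The main obstacle is IIA, both sharp and unsharp. The difficulty is structural: step 1 computes each alternative's Condorcet score from the \emph{entire} ballot profile, so two profiles $\rho, \rho'$ that agree on every voter's $(a,a')$ marginal can still differ on other pairs, produce different weak orders, and hence different linear extensions $\{\succ^1,\dots,\succ^w\}$ --- yet IIA demands the societal $(a,a')$ relation be insensitive to this. The plan is to show that the \emph{only} route to sharp societal support on $a \succ a'$ is the enforce-unanimity projection, which fires exactly when all voters fully support $a \succ a'$, a condition depending solely on the per-voter $(a,a')$ marginals and hence preserved under the IIA hypothesis; symmetrically, positive societal support on $a \succ a'$ is present precisely when at least one voter encodes that pair, again a marginal-only condition. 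Making this airtight requires verifying that the interleaved projections of step 6 for \emph{other} unanimous pairs, and the $\delta$-spreading of step 5, never create or destroy $(a,a')$-support except through these marginal-determined triggers --- that is, that the weak-order and extension machinery of steps 2--4, though profile-dependent, is always overwritten on the $(a,a')$ coordinate by steps 5 and 6. This interaction between the two non-standard steps and the discretization is where essentially all the real work lies.

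Finally, I would record that the full verification of these six checks is exactly the content of \Cref{thm_violate_arrows} from \cite{sun2021schrodinger}, which may be invoked directly; the value of the sketch above is to make explicit why steps 5 and 6 are engineered precisely so that unanimity and IIA survive the discretization while the minority shot simultaneously destroys both flavors of dictatorship, thereby refuting \Cref{conj_q_arrows}.
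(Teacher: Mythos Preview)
Your proposal is correct and aligns with the paper's treatment: the paper does not prove \Cref{conj_q_arrows} but rather presents it as a conjecture from \cite{sun2021schrodinger} that is disproved by \Cref{thm_violate_arrows}, which is precisely the counterexample (QCV) you identify and whose six property checks you correctly attribute to that cited result. Your additional sketch of why steps 5 and 6 of QCV are the operative mechanisms for unanimity, IIA, and non-dictatorship is accurate and goes beyond what the present paper supplies, since the paper simply invokes the result from \cite{sun2021schrodinger} without reproducing the argument.
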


    \section{Reference Table} \label{app_ref_table}

    \begin{tabular}{c|c|c}
        Term                                                            & Symbol/Acronym                                                                                        & Notes/Additional Info\\ \hline
        Alternative                                                     & $a \in A$                                                                                             & $|A| = m$, what is being voted on\\ \hline
        Voter                                                           & $v_i$                                                                                                 & $n$ voters, distinguished by index $i$. \\ \hline
        Ranking                                                         & $R \in \mathcal{L}(A), \mathcal{L}$                                                                      & \shortstack{When used with a subscript, it refers \\ to the ranking submitted by voter $i$.} \\ \hline
        Preference                                                      & $a \succ a'$                                                                                          & \shortstack{This is interpreted as preferring \\ alternative $a$ to alternative $a'$.}\\ \hline
        Ballot                                                          & \shortstack{$\prec_i$\\$\rho_i$}                                                                      & \shortstack{The ballots submitted by voters.\\ Classical and  Quantum, respectively.} \\ \hline
        Ballot Profile                                                  & \shortstack{$R = (R_1, \dots, R_n)$, \\ $\rho = \rho_1 \otimes \dots \otimes \rho_n$}       & \shortstack{The ballot profiles submitted by voters.\\ Classical and Quantum, respectively.} \\ \hline
        \shortstack{Classical Social \\ Choice Function(s)}             & SCF(s)                                                                                                & $f: \mathcal{L}(A)^{n} \to A$ \\ \hline 
        \shortstack{Classical Welfare \\ Choice Function(s)}            & SWF(s)                                                                                                & $F: \mathcal{L}(A)^{n} \to \mathcal{L}(A)$ \\ \hline
        \shortstack{Hilbert Space of\\ Rankings}                        & $\mathfrak{R}, \mathfrak{R}_i$                                                                        & $\mathbb{C}^{|\mathcal{L}(A)|}$ \\ \hline
        \shortstack{Hilbert Space of\\ Alternatives}                    & $\mathcal{A}$                                                                                         & $\mathbb{C}^{|A|}$ \\ \hline
        Preference Subspace                                             & $\mathcal{S}^{a \succ a'}$                                                                            & \shortstack{Subspace of the Hilbert Space of Rankings that has \\ the preference $a \succ a'$ as part of its basis vectors.} \\ \hline
        \shortstack{Winning Alternative \\ Subspace}                    & $\mathcal{S}^{a}$                                                                                     & \shortstack{Subspace of the Hilbert Space of Rankings that has \\ $a$ as highest ranked alternative for its basis vectors.} \\ \hline
        \shortstack{Projection onto \\ Preference Subspace}             & $\Pi^{a \succ a'}(\rho)$                                                                              & \shortstack{Projection Operator of ballot $\rho$ \\ onto the subspace $\mathcal{S}^{a \succ a'}$.} \\ \hline
        \shortstack{Projection onto \\ Alternative Subspace}            & $\Pi^{a}(\rho)$                                                                                       & \shortstack{Projection Operator of ballot $\rho$ \\ onto the subspace $\mathcal{S}^{a}$.} \\ \hline
        \shortstack{Probability of a\\preference (ranking)}             & Tr$(\Pi^{a \succ a'}(\rho))$                                                                          & Probability assigned to $a \succ a'$ in the ballot $\rho$. \\ \hline
        \shortstack{Probability of a\\preference (choice)}              & Tr$(\Pi^{a}(\rho))$                                                                                   & \shortstack{Probability assigned to $a$ being the highest \\ ranked alternative in the ballot $\rho$.} \\ \hline
        \shortstack{Quantum Welfare \\ Choice Function(s)}              & QSWF(s),  $\mathcal{E}$                                                                                & $\mathcal{E}: D(\mathfrak{R}_1 \otimes \dots \otimes \mathfrak{R}_n) \to D(\mathfrak{R})$ \\ \hline      
        \shortstack{Societal Ranking \\ Density Operator}               & $\rho_{soc} \in D(\mathfrak{R})$                                                                      & \shortstack{A Density Operator on Hilbert Space of $\mathfrak{R}$ \\ usually as the output of a QSWF.} \\ \hline
        \shortstack{Quantum Social \\ Choice Function(s)}               & QSCF(s), $\xi$                                                                                        & $\xi: D(\mathfrak{R}_1 \otimes \dots \otimes \mathfrak{R}_n) \to D(\mathcal{A})$ \\ \hline 
        \shortstack{Quantum Social \\ Choice Extension(s)}              & QSCE(s)                                                                                               & \shortstack{$h: D(\mathfrak{R}) \to D(\mathcal{A})$.\\ Composition with QSWF gives QSCF.} \\ \hline 
        \shortstack{Societal Alternative \\ Density Operator}           & $\alpha \equiv \alpha_{soc} \in D(\mathcal{A})$                                                       & \shortstack{A Density Operator on $\mathcal{A}$ \\ usually as the output of a QSCF.} \\ \hline
        \shortstack{Quantum Condorcet \\ Voting}                        & \shortstack{QCV \\ QCVNE}                                                                                                   & \shortstack{Specific QSWF originally defined in \\ Sun \textit{et al.} \cite{sun2021schrodinger}. QCV with the \\ Natural Extension} \\ \hline
        \shortstack{Natural Quantum Social \\ Choice Extension(s)}      & NQSCE, $\Lambda$                                                                                             & \shortstack{Specific QSCE.  Projects density of ranking \\ to their highest ranked alternative. \\ Sometimes called the Natural Extension.} \\ \hline
        \shortstack{Quantum Incentive \\ Compatible}                    & QIC                                                                                                   & Essentially, not Quantum Strategically Manipulatable \\ \hline

    \end{tabular}
\end{appendices}

\newpage
\bibliographystyle{acm}
\bibliography{citations}

\end{document}